\newtheorem{definition}{Definition}
\newtheorem{proposition}[definition]{Proposition}
\newtheorem{lemma}[definition]{Lemma}
\newtheorem{theorem}[definition]{Theorem}
\newtheorem{corollary}[definition]{Corollary}
\newtheorem{conjecture}[definition]{Conjecture}
\newtheorem{remark}[definition]{Remark}
\newtheorem{example}[definition]{Example}
\newtheorem{question}[definition]{Question}
\newtheorem{memo}[definition]{Memo}
\def\squareforqed{\hbox{\rlap{$\sqcap$}$\sqcup$}}
\def\qed{\ifmmode\squareforqed\else{\unskip\nobreak\hfil
\penalty50\hskip1em\null\nobreak\hfil\squareforqed
\parfillskip=0pt\finalhyphendemerits=0\endgraf}\fi}
\def\endenv{\ifmmode\;\else{\unskip\nobreak\hfil
\penalty50\hskip1em\null\nobreak\hfil\;
\parfillskip=0pt\finalhyphendemerits=0\endgraf}\fi}
\newenvironment{proof}{\noindent \textbf{{Proof.~} }}{\qed}
\def\Dbar{\leavevmode\lower.6ex\hbox to 0pt
{\hskip-.23ex\accent"16\hss}D}
\def\url@leostyle{%
  \@ifundefined{selectfont}{\def\UrlFont{\sf}}{\def\UrlFont{\small\ttfamily}}}
\def\bcj{\begin{conjecture}}
\def\ecj{\end{conjecture}}
\def\bcr{\begin{corollary}}
\def\ecr{\end{corollary}}
\def\bd{\begin{definition}}
\def\ed{\end{definition}}
\def\bea{\begin{eqnarray}}
\def\eea{\end{eqnarray}}
\def\bem{\begin{enumerate}}
\def\eem{\end{enumerate}}
\def\bex{\begin{example}}
\def\eex{\end{example}}
\def\bim{\begin{itemize}}
\def\eim{\end{itemize}}
\def\bl{\begin{lemma}}
\def\el{\end{lemma}}
\def\bma{\begin{bmatrix}}
\def\ema{\end{bmatrix}}
\def\bpf{\begin{proof}}
\def\epf{\end{proof}}
\def\bpp{\begin{proposition}}
\def\epp{\end{proposition}}
\def\bqu{\begin{question}}
\def\equ{\end{question}}
\def\br{\begin{remark}}
\def\er{\end{remark}}
\def\bt{\begin{theorem}}
\def\et{\end{theorem}}
\def\bmm{\begin{memo}}
\def\emm{\end{memo}}
\def\btb{\begin{tabular}}
\def\etb{\end{tabular}}
\newcommand{\nc}{\newcommand}
\def\a{\alpha}
\def\b{\beta}
\def\g{\gamma}
\def\d{\delta}
\def\r{\rho}
\def\G{\Gamma}
\def\X{\Xi}
\def\S{\Sigma}
\def\O{\Omega}
 \nc{\bbA}{\mathbb{A}} \nc{\bbB}{\mathbb{B}} \nc{\bbC}{\mathbb{C}}
 \nc{\bbD}{\mathbb{D}} \nc{\bbE}{\mathbb{E}} \nc{\bbF}{\mathbb{F}}
 \nc{\bbG}{\mathbb{G}} \nc{\bbH}{\mathbb{H}} \nc{\bbI}{\mathbb{I}}
 \nc{\bbJ}{\mathbb{J}} \nc{\bbK}{\mathbb{K}} \nc{\bbL}{\mathbb{L}}
 \nc{\bbM}{\mathbb{M}} \nc{\bbN}{\mathbb{N}} \nc{\bbO}{\mathbb{O}}
 \nc{\bbP}{\mathbb{P}} \nc{\bbQ}{\mathbb{Q}} \nc{\bbR}{\mathbb{R}}
 \nc{\bbS}{\mathbb{S}} \nc{\bbT}{\mathbb{T}} \nc{\bbU}{\mathbb{U}}
 \nc{\bbV}{\mathbb{V}} \nc{\bbW}{\mathbb{W}} \nc{\bbX}{\mathbb{X}}
 \nc{\bbZ}{\mathbb{Z}}
 \nc{\bA}{{\bf A}} \nc{\bB}{{\bf B}} \nc{\bC}{{\bf C}}
 \nc{\bD}{{\bf D}} \nc{\bE}{{\bf E}} \nc{\bF}{{\bf F}}
 \nc{\bG}{{\bf G}} \nc{\bH}{{\bf H}} \nc{\bI}{{\bf I}}
 \nc{\bJ}{{\bf J}} \nc{\bK}{{\bf K}} \nc{\bL}{{\bf L}}
 \nc{\bM}{{\bf M}} \nc{\bN}{{\bf N}} \nc{\bO}{{\bf O}}
 \nc{\bP}{{\bf P}} \nc{\bQ}{{\bf Q}} \nc{\bR}{{\bf R}}
 \nc{\bS}{{\bf S}} \nc{\bT}{{\bf T}} \nc{\bU}{{\bf U}}
 \nc{\bV}{{\bf V}} \nc{\bW}{{\bf W}} \nc{\bX}{{\bf X}}
 \nc{\bZ}{{\bf Z}}
\nc{\cA}{{\cal A}} \nc{\cB}{{\cal B}} \nc{\cC}{{\cal C}}
\nc{\cD}{{\cal D}} \nc{\cE}{{\cal E}} \nc{\cF}{{\cal F}}
\nc{\cG}{{\cal G}} \nc{\cH}{{\cal H}} \nc{\cI}{{\cal I}}
\nc{\cJ}{{\cal J}} \nc{\cK}{{\cal K}} \nc{\cL}{{\cal L}}
\nc{\cM}{{\cal M}} \nc{\cN}{{\cal N}} \nc{\cO}{{\cal O}}
\nc{\cP}{{\cal P}} \nc{\cQ}{{\cal Q}} \nc{\cR}{{\cal R}}
\nc{\cS}{{\cal S}} \nc{\cT}{{\cal T}} \nc{\cU}{{\cal U}}
\nc{\cV}{{\cal V}} \nc{\cW}{{\cal W}} \nc{\cX}{{\cal X}}
\nc{\cZ}{{\cal Z}}
\nc{\hA}{{\hat{A}}} \nc{\hB}{{\hat{B}}} \nc{\hC}{{\hat{C}}}
\nc{\hD}{{\hat{D}}} \nc{\hE}{{\hat{E}}} \nc{\hF}{{\hat{F}}}
\nc{\hG}{{\hat{G}}} \nc{\hH}{{\hat{H}}} \nc{\hI}{{\hat{I}}}
\nc{\hJ}{{\hat{J}}} \nc{\hK}{{\hat{K}}} \nc{\hL}{{\hat{L}}}
\nc{\hM}{{\hat{M}}} \nc{\hN}{{\hat{N}}} \nc{\hO}{{\hat{O}}}
\nc{\hP}{{\hat{P}}} \nc{\hR}{{\hat{R}}} \nc{\hS}{{\hat{S}}}
\nc{\hT}{{\hat{T}}} \nc{\hU}{{\hat{U}}} \nc{\hV}{{\hat{V}}}
\nc{\hW}{{\hat{W}}} \nc{\hX}{{\hat{X}}} \nc{\hZ}{{\hat{Z}}}
\nc{\hn}{{\hat{n}}}
\def\diag{\mathop{\rm diag}}
\def\dim{\mathop{\rm Dim}}
\def\I{\mathop{\rm I}}
\def\lin{\mathop{\rm span}}
\def\max{\mathop{\rm max}}
\def\min{\mathop{\rm min}}
\def\rank{\mathop{\rm rank}}
\def\ox{\otimes}
\def\ra{\rightarrow}
\newcommand{\ket}[1]{|#1\rangle}
\newcommand{\proj}[1]{| #1\rangle\!\langle #1 |}
\newcommand{\ketbra}[2]{|#1\rangle\!\langle#2|}
\def\Dbar{\leavevmode\lower.6ex\hbox to 0pt
{\hskip-.23ex\accent"16\hss}D}
\begin{document}

\title{Equality condition for a matrix inequality by partial transpose}

\date{\today}

\pacs{03.65.Ud, 03.67.Mn}

\author{Nalan Wang}\email[]{nalanwang@buaa.edu.cn}
\affiliation{LMIB(Beihang University), Ministry of Education, and School of Mathematical Sciences, Beihang University, Beijing 100191, China}

\author{Lin Chen}\email[]{linchen@buaa.edu.cn (corresponding author)}
\affiliation{LMIB(Beihang University), Ministry of Education, and School of Mathematical Sciences, Beihang University, Beijing 100191, China}

\begin{abstract}
The partial transpose map is a linear map widely used quantum information theory. We study the equality condition for a matrix inequality generated by partial transpose, namely $\rank(\sum^K_{j=1} A_j^T \otimes B_j)\le K \cdot \rank(\sum^K_{j=1} A_j \otimes B_j)$, where $A_j$'s and $B_j$'s are respectively the matrices of the same size, and $K$ is the Schmidt rank. We explicitly construct the condition when $A_i$'s are column or row vectors, or $2\times 2$ matrices. For the case where the Schmidt rank equals the dimension of $A_j$, we extend the results from $2\times 2$ matrices to square matrices, and further to rectangular matrices. In detail, we show that $\sum^K_{j=1} A_j \otimes B_j$ is locally equivalent to an elegant block-diagonal form consisting solely of identity and zero matrices. We also study the general case for $K=2$, and it turns out that the key is to characterize the expression of matrices $A_j$'s and $B_j$'s. 
\end{abstract}

\maketitle

Keyword: matrix inequality, tensor product, Schmidt rank, locally equivalent

%\tableofcontents

%\Large

\section{Introduction}

Composite systems are a basic physical setting in quantum mechanics and information. For example, nonlocality and entanglement occurs between two or more particles. They form a composite bipartite or multipartite system and entanglement detection has attracted lots of interests over the past decades \cite{hhh96,horodecki1997,hh1999}. The quantification of entanglement characterizes how much entanglement a multipartite state has, and is a basic problem in quantum information theory and application. Well-known entanglement measures have been constructed such as distillable entanglement \cite{Divincenzo2000Evidence,Chen2008Rank,Chen2012Distillability,Lin2016Non,QIAN2021139} and Schmidt measure in terms of Schmidt rank of a multipartite pure state
\cite{Eisert2000The}. They work for both bipartite and multipartite systems, and the latter have a more complex structure. For example, the constraints over the entropies of marginals of a multipartite state can be expressed as inequalities 
\cite{2005A,2012Infinitely}. Recently the $0$-Renyi entropy, being the limit of $\a$-Renyi entropy with $\a\ra0$, has been applied to estimate the relation among three bipartite marginals in Ref. \cite{cadney2014}. The reference proposed a conjectured matrix inequality $\rank(\sum^K_{j=1} R_j^T \otimes S_j)\le K \cdot \rank (\sum^K_{j=1} R_j \otimes S_j)$, in terms of matrix Kronecker product and Schmidt rank $K$, which shows potentially undiscovered multipartite quantum correlations. The conjecture has been proven via a pure algebraically style in \cite{WOS:000966945300001}. However, the condition saturating the inequality is unknown yet except some primary results in \cite{tan}, due to the involved sum of Kronecker products of matrices in the above inequality.

In this paper, we shall explore more cases for the condition. First, we lay the groundwork for the proofs as follows. Lemmas \ref{lemma1}-\ref{mtimesnA&B} provide explanations and clarifications on matrix rank, linear dependence, and image spaces. Definitions \ref{definition4} and \ref{def:equivalent} formalize the Kronecker product and its properties. The inequality of Theorem \ref{cj:1} has already been established in prior work. Lemmas \ref{le:X=A+yB}-\ref{eq:K<=} extend properties of matrix rank and image spaces, all of which will be utilized in subsequent proofs. In Theorem \ref{le:ANYm2,K>2}, we present the necessary and sufficient conditions for the inequality to become an equality when $R_i$ is a column vector. Here, the matrix has equal rank under transposition of its $R_j$ and $S_j$ part. Through an analysis from specific instances to general cases, we deduce that $\rank(\sum^K_{j=1} R_j \otimes S_j^T)= K \cdot \rank (\sum^K_{j=1} R_j \otimes S_j)$ holds if and only if $m_1n_2\geq m_2$ and there exists an invertible matrix $Q\in \bbM_{n_2}$ such that $\sum^K_{j=1} R_j \otimes S_j$ can be reduced to a matrix composed of $k_id_i$ linearly independent vectors and zero matrices. A faster verification method for this theorem is to analyze the condition of $K = 2 $, which is the initial case that lead us to formulate Theorem \ref{le:ANYm2,K>2}, too. As is mentioned in the paper, this theorem can also be extended to the discussion of row vectors. 
 
Next, we proceed to investigate the condition when  $R_j$ in $\rank(\sum^K_{j=1} R_j \otimes S_j^T)= K \cdot \rank (\sum^K_{j=1} R_j \otimes S_j)$ is a $2\times2$ matrix. We respectively obtain Lemmas \ref{le:SR=2,Ri=Si=2X2}-\ref{le:sr=4}. Lemma \ref{le:SR=2,Ri=Si=2X2} is the result of the condition when $S_1,S_2\in\bbM_2$. The following Lemma \ref{le:SR=3+4,Ri=Si=2X2} states two conditions when the Schmidt rank of order-four matrix $X$ has rank three and four. Then we explore the condition when the Schmidt rank is two in Lemma \ref{le:SR=2,Ri=2X2} and the Schmidt rank is three in Lemma \ref{le:sr=4} in which we both have $S_1,S_2\in\bbM_{m_2,n_2}$. Theorem \ref{le:m1=n1=2,sr=4}, presented next, demonstrates one of the main results in this paper. We conclude that if the Schmidt rank matrix $X$ has rank four, then the inequality is saturated if and only if $X$ has rank $r$ where $r=\rank X_i$, $X_i,i=1,2,3,4$ is a $m_2\times n_2$ matrix and $X$ can be reduced to the simplest form $\left[\begin{matrix}\left[\begin{matrix}I_r&O\\O&O\\\end{matrix}\right]&\left[\begin{matrix}O&I_r\\O&O\\\end{matrix}\right]\\\left[\begin{matrix}O&O\\I_r&O\\\end{matrix}\right]&\left[\begin{matrix}O&O\\O&I_r\\\end{matrix}\right]\\\end{matrix}\right]$. Moreover, in Theorem \ref{k^2Schmidt_rank} we generalize the case of Schmidt rank $2\times 2$ to $k\times k$ where the necessary and sufficient condition is that $X$ can be transformed into the general form in \eqref{XEij}, which is a beautiful form. Finally, we extend this concise form to the general case of $m_1 \times n_1$  matrices $R_i$, which yields Lemma \ref{m1n1Schmidt_rank}. An illustrative example \eqref{X2times3} is provided along with corresponding explanations and verification.

Third, we simplify every Schmidt-rank-two matrix to \eqref{SVD_result:EK} using SVD. We repeatedly use this way to investigate $\rank X$, reducing $X$ it to a quasi-diagonal form in Lemma \ref{SVDX}, with most block matrices taking on a standard form. The only imperfection is that there always remains an irreducible matrix in the upper-right corner, which depends on the specific form of the original matrix. In fact, SVD serves as an effective analytical tool. In addition, Lemma \ref{le:X=AotimesB+CotimesD}, Corollary \ref{cr:X=AotimesB+CotimesD} and Lemma \ref{le:X=AotimesB+CotimesD=square} can be some new ideas on the investigation of Schmidt-rank-two matrices. Lemma \ref{le:blockDiagonal} presents the key finding that facilitates a deeper understanding of this paper. These results are highly valuable for deriving general principles.

The Schmidt rank (i.e., tensor rank) has been widely used in other quantum-information topics, such as graph states \cite{WOS:000222471400029}, geometric measure of entanglement \cite{WOS:000187004700026},
resource theory \cite{WOS:000463910100001}, multipartite state transformation under local unitary equivalence \cite{WOS:000274002900006} and SLOCC equivalence \cite{WOS:000259793800006}. Our mathematical results may also shed light on the existing challenges towards these topics.

The rest of this paper is organized as follows. In Sec. \ref{sec:pre} we introduce the preliminary facts used in this paper. Then we investigate the equality condition for the inequality in Sec. \ref{sec:condition=equality} and \ref{sec:2x2}, when the matrices $A_j$'s are respectively column/row vectors or two-by-two matrices. In Sec. \ref{sec:rank=bm} we study the bipartite matrices with Schmidt rank two. Finally we conclude in Sec. \ref{sec:con}.

\section{Preliminaries}
\label{sec:pre}

In this section we introduce the preliminary knowledge of this paper. We denote $\bbM_{m,n}$ as the set of $m\times n$ matrices over the field of complex numbers, and $\bbM_n:=\bbM_{n,n}$ for convenience. Let $I_{n}$ be the $n\times n$ identity matrix. The following facts are known in linear algebra.

\begin{lemma}
	\label{lemma1}	
	The following inequalities
	\begin{eqnarray}
	\label{baseineq1}
	r(A_1)\le r(\bma A_1&A_2&\cdots&A_n
	\ema)\le r(A_1)+\cdots+r(A_n),
	\end{eqnarray}
	\begin{eqnarray}
	\label{baseineq2}
	r(A)+r(C)\le r(\bma A&0\\
	B&C\ema), 
	\end{eqnarray}
   hold for any  block matrix.
   \qed
\end{lemma}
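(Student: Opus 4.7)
The plan is to treat the two inequalities in Lemma \ref{lemma1} separately, each through an elementary column-space argument. For \eqref{baseineq1}, setting $M := \bma A_1 & A_2 & \cdots & A_n \ema$, I would use the identity that the column space of $M$ equals $\mathrm{Col}(A_1) + \cdots + \mathrm{Col}(A_n)$, viewed as a sum of subspaces of the common ambient $\bbC^m$. The lower bound $r(A_1)\le r(M)$ then follows at once from $\mathrm{Col}(A_1) \subseteq \mathrm{Col}(M)$, while the upper bound is the standard subadditivity $\dim(V_1 + \cdots + V_n) \le \dim V_1 + \cdots + \dim V_n$ of dimension under subspace sums.

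For \eqref{baseineq2}, I would exhibit $r_1 + r_2$ linearly independent columns inside $N := \bma A & 0 \\ B & C \ema$, where $r_1 = r(A)$ and $r_2 = r(C)$. Concretely, I would first pick $r_1$ column indices in the left block-column whose top parts form independent columns of $A$, and then $r_2$ column indices in the right block-column whose bottom parts form independent columns of $C$. To verify independence of the $r_1 + r_2$ resulting columns of $N$, I would take a purported null combination and project onto the top block of rows first: the zero block in the top-right forces the projection to reduce to a null combination of the $r_1$ chosen independent columns of $A$, pinning the first group of coefficients to zero. The surviving combination, restricted to the bottom block of rows, then becomes a null combination of the $r_2$ chosen independent columns of $C$, so the remaining coefficients must vanish as well. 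This gives $r(N) \ge r_1 + r_2 = r(A) + r(C)$.

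The only subtlety worth flagging, rather than a genuine obstacle, is the order of projection in the second argument: one must annihilate the right block-column by projecting onto the top rows before invoking independence of the chosen columns of $C$ on the bottom rows, and this is possible precisely because the top-right block of $N$ is zero (the auxiliary block $B$ plays no role). No case distinction, delicate cancellation, or hypothesis on the block dimensions beyond conformability is required, and the same argument applies verbatim to block upper-triangular matrices by a symmetric choice of projection.
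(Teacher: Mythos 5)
Your argument is correct: the column-space identity with subadditivity of dimension handles \eqref{baseineq1}, and the selection of $r(A)+r(C)$ columns whose independence is forced by the zero top-right block handles \eqref{baseineq2}. The paper offers no proof to compare against --- it states the lemma as a known fact of linear algebra and closes it with \qed{} immediately --- so your write-up simply supplies the standard elementary justification that the authors took for granted.
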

\begin{lemma}
	\label{lemmaadd}
	(i) Suppose $A_1,\cdots,A_n\in \bbM_{m_2,n_2}$ are linearly independent, $R\in \bbM_{n_2}$ is invertible. Then $A_1\cdot R,\cdots,A_n\cdot R$ are linearly independent.
	
	(ii) Suppose $A_1,\cdots,A_n,A_{n+1}\in \bbM_{m_2,n_2}$, and $R\in \bbM_{n_2}$ is invertible. If $A_{n+1}\in \lin \{A_1,\cdots,A_n\}$, then $A_{n+1}\cdot R\in \lin \{A_1\cdot R,\cdots,A_n\cdot R\}$.
	\qed
\end{lemma}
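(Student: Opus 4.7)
The plan is to prove both parts by direct computation, exploiting the fact that right-multiplication by $R$ is a linear map on $\bbM_{m_2,n_2}$ that is additionally bijective when $R$ is invertible. The two claims are really the two faces of the same observation: an invertible linear map preserves linear independence and linear spans.

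For part (i), I would begin with an arbitrary vanishing combination $\sum_{i=1}^n c_i (A_i R) = 0$ with $c_i \in \bbC$. Using distributivity of matrix multiplication and the fact that scalars commute through, the left-hand side factors as $\left(\sum_{i=1}^n c_i A_i\right) R$. Right-multiplying by $R^{-1}$, which exists by hypothesis, yields $\sum_{i=1}^n c_i A_i = 0$. The assumed linear independence of $A_1,\ldots,A_n$ then forces $c_1=\cdots=c_n=0$, which is precisely the linear independence of $A_1 R,\ldots,A_n R$.

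For part (ii), I would use the hypothesis $A_{n+1} \in \lin\{A_1,\ldots,A_n\}$ to write $A_{n+1} = \sum_{i=1}^n c_i A_i$ for some scalars $c_i$, then right-multiply both sides by $R$ to obtain $A_{n+1} R = \sum_{i=1}^n c_i (A_i R)$, exhibiting $A_{n+1} R$ as an element of $\lin\{A_1 R,\ldots,A_n R\}$. Note that invertibility of $R$ is not required for this direction. There is no real obstacle here; the lemma is recorded mainly so that later sections can invoke it without comment when local transformations of the form $X \mapsto (I \otimes Q) X$ or block-column operations are applied to the Kronecker sums under study.
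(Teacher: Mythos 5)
Your proof is correct. The paper itself supplies no argument for this lemma---it is stated as a standard linear-algebra fact and closed immediately with a QED symbol---and your argument is exactly the standard one that justifies it: right-multiplication by an invertible $R$ is a bijective linear map on $\bbM_{m_2,n_2}$, so factoring $\sum_i c_i (A_i R) = \bigl(\sum_i c_i A_i\bigr) R$ and cancelling $R$ gives (i), while applying $R$ to a spanning relation gives (ii). Your side remark that invertibility is not needed for (ii) is also accurate.
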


Here is another fact on the linear independence.
\begin{lemma}
\label{mtimesnA&B}
Let $A$ and $B$ be two $m\times n$ matrices with $m\ge n$. If the span of $A$ and $B$ has no full-rank matrix, then there exist $s$ columns $a_1,...,a_s$ of $A$, and $s$ columns $b_1,...,b_s$ of $B$ in the same position as those of $A$, such that $a_1,...,a_s,b_1,...,b_s$ belong to a subspace of $\bbC^m$ of dimension $s-1$.
\end{lemma}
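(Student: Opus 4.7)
The plan is to apply the theory of singular matrix pencils. First, I would observe that the hypothesis that no matrix in the span $\lin\{A,B\}$ has rank $n$ means precisely that the pencil $A+\lambda B$ has normal column rank $<n$; equivalently, the columns of $A+\lambda B$ are linearly dependent over the field $\bbC(\lambda)$. By the classical theory of Kronecker minimal indices, such a pencil admits a nonzero polynomial null vector $v(\lambda)=v_0+\lambda v_1+\cdots+\lambda^{s-1}v_{s-1}$ with $v_k\in\bbC^n$, and I would fix one of minimal degree $s-1\ge 0$.

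Expanding $(A+\lambda B)v(\lambda)\equiv 0$ and equating coefficients of each power of $\lambda$ yields the ladder of identities
\[
Av_0=0,\quad Av_k+Bv_{k-1}=0\ (1\le k\le s-1),\quad Bv_{s-1}=0.
\]
A minimality argument then shows that $v_0,\ldots,v_{s-1}$ are linearly independent, since any nontrivial dependence among them could be combined with these identities to build a null vector of strictly smaller degree, contradicting the choice of $v$. Setting $W:=\lin\{v_0,\ldots,v_{s-1}\}$ and $V:=\lin\{Av_1,\ldots,Av_{s-1}\}=\lin\{Bv_0,\ldots,Bv_{s-2}\}$, the ladder forces $AW\sue V$ and $BW\sue V$, with $\dim W=s$ and $\dim V\le s-1$.

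To recover the ``columns at common positions'' formulation in the conclusion, I would use $\rank[v_0\,|\,\cdots\,|\,v_{s-1}]=s$ to pick row indices $i_1,\ldots,i_s$ for which the corresponding $s\times s$ submatrix is invertible, and then apply an invertible $Q\in\bbM_n$ sending $e_{i_j}\mapsto v_{j-1}$ (justified in the spirit of Lemma \ref{lemmaadd}). Then columns $i_1,\ldots,i_s$ of $AQ$ and $BQ$ are exactly $Av_0,\ldots,Av_{s-1}$ and $Bv_0,\ldots,Bv_{s-1}$, all of which lie in the $(s-1)$-dimensional subspace $V$, giving the desired $a_1,\ldots,a_s$ and $b_1,\ldots,b_s$ at common indices. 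The main obstacle I expect is the minimality step yielding linear independence of the $v_k$, a delicate but standard fact in singular pencil theory; once that is in hand, the passage from the abstract inclusions $AW+BW\sue V$ to literal columns at the same indices via $Q$ is routine.
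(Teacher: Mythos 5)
Your pencil-theoretic route is genuinely different from the paper's proof, which is a two-sentence contradiction sketch: the paper takes $s=n$ and infers from $\rank[A\ \ B]\ge n$ that some $\alpha A+\beta B$ has independent columns, an inference that is simply invalid. Your core chain, by contrast, is sound: the hypothesis forces the pencil $A+\lambda B$ to have normal column rank $<n$, Kronecker's minimal-index theory supplies a minimal-degree polynomial null vector $v(\lambda)=\sum_{k=0}^{s-1}\lambda^k v_k$ with linearly independent coefficients, and the ladder $Av_0=0$, $Av_k=-Bv_{k-1}$ $(1\le k\le s-1)$, $Bv_{s-1}=0$ places all of $Av_0,\dots,Av_{s-1},Bv_0,\dots,Bv_{s-1}$ inside $V=\lin\{Bv_0,\dots,Bv_{s-2}\}$ with $\dim V\le s-1$. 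This is a far more substantive argument than the paper's, and it correctly handles the degenerate case $s=1$ (a common kernel vector).

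The genuine gap is your final step: applying $Q$ produces $s$ columns of $AQ$ and $BQ$ lying in $V$, not $s$ columns of $A$ and $B$ at common positions, which is what the lemma literally asserts. This gap cannot be closed by any proof, because the literal statement is false. Take $m=n=2$, $A=\left[\begin{smallmatrix}1&-1\\0&0\end{smallmatrix}\right]$, $B=\left[\begin{smallmatrix}0&0\\1&-1\end{smallmatrix}\right]$: every $\alpha A+\beta B$ is singular, yet for $s=1$ no column pair $\{a_i,b_i\}$ lies in a $0$-dimensional subspace, and for $s=2$ the four columns span all of $\bbC^2$, so no $s$ satisfies the conclusion. (The same example defeats the paper's own argument, since here $\rank[A\ \ B]=2$ while no member of the span is invertible.) What you have actually proved is the corrected statement in which a common invertible column transformation $Q$ is first applied to both $A$ and $B$ — equivalently, one is allowed $s$ common linear combinations of columns rather than literal columns. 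You should either restate the lemma in that form or note explicitly that your conclusion concerns $AQ$ and $BQ$; as written, your proof and the statement do not match, through no fault of the proof.
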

\begin{proof}
We prove the statement by contradiction. Firstly, if the span of $A$ and $B$ has no full-rank matrix, for any $\alpha,\beta \in \bbR$, the column vectors of matrix $\alpha 
 A+\beta B$ are linearly dependent. Secondly, we consider any collection of column vectors of $A$ and $B$, say $\rank(a_{i1},a_{i2},...,a_{is},b_{i1},b_{i2},...,b_{is})\geq s$. If $s=n$, then the columns of matrix $\alpha  A+\beta B$ are linearly independent. This leads to a contradiction with our hypothesis. 
\end{proof}

Next, a bipartite matrix $M\in \bbM_{m_1,n_1}\ox\bbM_{m_2,n_2}$ is denoted as $M=\sum^{m_1}_{i=1}\sum^{n_1}_{j=1}\ketbra{i}{j}\otimes M_{i,j}$ with $M_{i,j}\in \bbM_{m_2,n_2}.$  We denote the partial transposes of $M$ w.r.t. system $A$ and $B$ as $M^{\Gamma_A}=\sum^{m_1}_{i=1}\sum^{n_1}_{j=1}\ketbra{j}{i}\otimes M_{i,j}$, and $M^{\Gamma_B}=\sum^{m_1}_{i=1}\sum^{n_1}_{j=1}\ketbra{i}{j}\otimes M_{i,j}^T$, respectively. Further, the number of linearly independent blocks $M_{i,j}$ is called the Schmidt rank $Sr:=Sr(M)$ of $M$. One can show that
$Sr(M)
=
Sr(M^{\G_A})=Sr(M^{\G_B})
=Sr(M^T).	
$

\begin{definition}
\label{definition4}
The Schmidt-rank-$k$ bipartite matrix $X\in\bbM_{m_1,n_1}\otimes\bbM_{m_2,n_2}$ equals the sum of $A_j  \otimes B_j$ from $j=1$ to $k$ where $A_j$ is an $m_1 \times n_1$ matrix and $B_j$ is an $m_2 \times n_2$ matrix. 
%We also know that  $\{X_1,X_2,…X_k\}$ are linearly independent if and only if the sum of $k_i X_i$ from $i=1$ to $k$ equals zero only when $k_i$ all be zero. 
\qed
\end{definition}

To classify bipartite matrices with a given Schmidt rank, we review the local equivalence of two bipartite matrices.

\begin{definition}
\label{def:equivalent}
	Two matrices $M$ and $N$ are said to be locally equivalent, $M\sim N$, if there exist invertible tensor product matrices $P_1\otimes P_2$ and $Q_1\otimes Q_2$ such that $(P_1\otimes P_2)M(Q_1\otimes Q_2)=N$. 
	\qed
\end{definition}
Here  $M=\sum^{m_1}_{i=1}\sum^{n_1}_{j=1}\ketbra{i}{j}\otimes M_{i,j}$, $P_1$ and $Q_1$ correspond to block-row and block-column operations on $M$ respectively, $P_2$ and $Q_2$ correspond to row and column operations on each block $M_{i,j}$ respectively, which is $M_{i,j}\rightarrow Q_1M_{i,j}Q_2$. So $M$ and $N$ have the same rank and Schmidt rank. One can similarly show that $M^{\G_B}$ and $N^{\G_B}$ have the same rank and Schmidt rank. Next the following fact is from the paper \cite{WOS:000966945300001}.

\begin{lemma}
\label{cj:1}
For any $M\in \bbM_{m_1,n_1}\ox\bbM_{m_2,n_2}$, we have 
\bea
\label{eq:cj2}
r(M^{\G_B})\le Sr(M)\cdot r(M).
\eea
\end{lemma}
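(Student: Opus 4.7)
The plan is to start from the operator Schmidt decomposition $M = \sum_{j=1}^{K} A_j \otimes B_j$, where $\{A_j\}$ and $\{B_j\}$ are each linearly independent (automatic once $K = Sr(M)$), so that $M^{\G_B} = \sum_{j=1}^{K} A_j \otimes B_j^T$. First I would record the naive subadditivity bound
\[
\rank(M^{\G_B}) \le \sum_{j=1}^{K}\rank(A_j\otimes B_j^T) = \sum_{j=1}^{K}\rank(A_j)\rank(B_j),
\]
and note that this is not sharp enough: the right-hand side has no $\rank(M)$ in it at all, so a more structural argument that exploits the linear independence of the $A_j$'s will be needed.

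Next I would treat the row-vector case $m_1 = 1$ as a warm-up. There each $A_j$ is a row, so $M$ and $M^{\G_B}$ are single block rows with $\rank(M) = \rank[B_1,\ldots,B_K]$ and $\rank(M^{\G_B}) = \rank[B_1^T,\ldots,B_K^T]$, and
\[
\rank[B_1^T,\ldots,B_K^T] \le \sum_j \rank(B_j) \le K\max_j \rank(B_j) \le K\,\rank[B_1,\ldots,B_K],
\]
because each $B_j$ is a submatrix (block column) of the horizontal concatenation. A natural attempt for general $M$ would then be to split into $m_1$ block rows $M_i := \sum_j (A_j)_{i,\cdot}\otimes B_j$, apply the warm-up to each one (the Schmidt rank of $M_i$ is at most $K$), and sum. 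However, this naive split loses a factor of $m_1$, because $\sum_i\rank(M_i)$ can be as large as $m_1\rank(M)$, giving only $\rank(M^{\G_B}) \le m_1K\rank(M)$, which is weaker than the target.

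The hard part will be closing this gap. The approach I would pursue is to factor $M^{\G_B}$ into a product of structured block matrices. Using the Kronecker identity $A_j\otimes B_j^T = (I_{m_1}\otimes B_j^T)(A_j\otimes I_{m_2})$, one gets
\[
M^{\G_B} = \bma I_{m_1}\otimes B_1^T & \cdots & I_{m_1}\otimes B_K^T\ema \cdot \bma A_1\otimes I_{m_2} \\ \vdots \\ A_K\otimes I_{m_2}\ema,
\]
together with the parallel factorization of $M$ obtained by replacing each $B_j^T$ by $B_j$. Each rectangular factor has a rank bounded in terms of either $\rank[A_1,\ldots,A_K]$ or $\rank\bma B_1 \\ \vdots \\ B_K\ema$, possibly multiplied by $n_1$ or $m_2$. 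The aim is then to combine these two factorizations with the block-rank inequalities \eqref{baseineq1} and \eqref{baseineq2}, and to use the linear independence of the $A_j$'s (via dual functionals $X_j$ with $\tr(X_jA_{j'})=\delta_{jj'}$) to pull a common $\rank(M)$ out of the product. The technical heart — and the step I expect to be the main obstacle — is converting the trivial product-of-ranks upper bounds on the two factors into a single bound involving $\rank(M)$ itself, so that the Schmidt-rank factor $K$ comes out cleanly in front; this is the purely algebraic manipulation carried out in \cite{WOS:000966945300001}.
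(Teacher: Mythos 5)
The paper does not actually prove this lemma: it is stated as an imported fact, with the sentence ``the following fact is from the paper \cite{WOS:000966945300001}'' and no argument supplied. So there is no internal proof to compare your attempt against, and your decision to defer the ``technical heart'' to that same reference is, in a sense, exactly what the paper does. Your preliminary computations are also sound: the subadditivity bound $\rank(M^{\G_B})\le\sum_j\rank(A_j)\rank(B_j)$ is correct, the row-vector warm-up correctly identifies $\rank(M)=\rank[B_1,\dots,B_K]$ and $\rank(M^{\G_B})=\rank[B_1^T,\dots,B_K^T]$ (using that the linear independence of the $A_j$'s makes the coefficient matrix full rank, so the block spans coincide), and the chain $\rank[B_1^T,\dots,B_K^T]\le\sum_j\rank(B_j)\le K\,\rank[B_1,\dots,B_K]$ closes that case. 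The Kronecker factorization $A_j\otimes B_j^T=(I_{m_1}\otimes B_j^T)(A_j\otimes I_{m_2})$ is likewise a valid identity.

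That said, judged as a proof rather than as a plan, your proposal has a genuine gap at precisely the point you flag: you never show how the two block factorizations of $M^{\G_B}$ and $M$ combine with \eqref{baseineq1}--\eqref{baseineq2} and the dual functionals to produce the factor $K\cdot\rank(M)$, and the naive bounds you do establish (subadditivity, or the block-row split costing a factor $m_1$) provably cannot be pushed through to the target inequality by themselves. This is not a cosmetic omission --- the general case is the entire content of the lemma, and the argument in \cite{WOS:000966945300001} is a substantive piece of work rather than a routine manipulation. If the lemma is treated, as the paper treats it, as a citation to prior work, your write-up is an acceptable framing of why the result is nontrivial; if it is meant to stand as a proof, the central step is missing and would need to be reconstructed in full.
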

The fact is equivalent to $r(M^{\G_A})\le Sr(M)\cdot r(M)$. If we write $M=\sum^K_{i=1} P_i \ox Q_i$, where $P_1,\dots,P_K\in \bbM_{m_1,n_1}$ are linearly independent, and
$Q_1,\dots,Q_K\in \bbM_{m_2,n_2}$ are linearly independent, then it is also equivalent to
$ r \bigg( \sum^K_{i=1} P_i \ox Q_i^T \bigg)
\le
K\cdot r \bigg( \sum^K_{i=1} P_i \ox Q_i \bigg)$.

\begin{lemma}
\label{le:X=A+yB}
Suppose there exist $x\in \bbC$ and $y\in \bbC$ such that $xA+yB$ is full-rank where $A$ and $B$ are both order-n matrices. The number of linearly independent solutions $(a,b)$ making the matrix $aA+bB$ rank-deficient does not exceed $n$.
\end{lemma}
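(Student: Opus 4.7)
The plan is to reduce the lemma to a statement about the zero locus of a determinantal polynomial. Define $f(a,b):=\det(aA+bB)$. Since each entry of $aA+bB$ is a linear form in $(a,b)$ and the determinant is an $n$-linear alternating function of the columns, $f$ is a homogeneous polynomial of total degree $n$ in the two variables $a,b$ over $\bbC$. The hypothesis furnishes $(x,y)$ with $xA+yB$ invertible, hence $f(x,y)\neq 0$, so $f\not\equiv 0$.

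Next I would factor $f$ over the algebraically closed field $\bbC$. Any nonzero homogeneous polynomial of degree $n$ in two variables splits as
\begin{equation*}
f(a,b)\;=\;c\prod_{i=1}^{n}(\alpha_i a+\beta_i b),\qquad c\neq 0,\;(\alpha_i,\beta_i)\neq(0,0),
\end{equation*}
so its zero locus in $\bbC^{2}$ is the union of at most $n$ one-dimensional linear subspaces (lines through the origin). A pair $(a,b)$ makes $aA+bB$ rank-deficient exactly when $f(a,b)=0$, i.e.\ when $(a,b)$ lies on one of these lines. Pairwise non-proportional solutions must come from distinct lines of the factorization, so there are at most $n$ of them, which is the content of the lemma (the phrase ``linearly independent solutions'' being read as ``pairwise non-proportional'', since in $\bbC^{2}$ at most two vectors can be linearly independent in the strict sense).

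The main (essentially only) obstacle is clarifying the meaning of ``linearly independent solutions''; once this is interpreted as projective (non-proportional) solutions, the argument is a clean application of the fundamental theorem of algebra to the determinantal polynomial, together with the observation that invertibility of some $xA+yB$ rules out the degenerate case $f\equiv 0$ in which every $(a,b)$ would be a solution and the bound would fail. No further technical input is required.
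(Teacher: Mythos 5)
Your proposal is correct and follows essentially the same route as the paper: both arguments treat $\det(aA+bB)$ as a nonzero homogeneous polynomial of degree $n$ in $(a,b)$, note that its zero set consists of directions (projective points), and bound the number of such directions by the degree. Your version is the cleaner one, since the explicit factorization into linear forms over $\bbC$ makes rigorous what the paper only sketches by "analogy," and you correctly pin down the reading of ``linearly independent solutions'' as pairwise non-proportional directions.
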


\begin{proof}
We assume that $\det(aA+bB)$ is a polynomial in two variables, $a$ and $b$. Respectively, let $P(a,b)=\det(aA+bB)$, in which $a\in \bbC$ and $b\in \bbC$. 

If there exist $x\in \bbC$ and $y\in \bbC$ such that $\det(xA+yB)\neq 0$, then $P(a,b)$ is a zero polynomial. Also, if $(a,b)$ is a solution of the equation, $k(a,b)$ is another solution of this equation where $k\in \bbC$. Therefore, we can regard each solution as a direction, and different solutions are distinct in the sense of linear independence.  

Thus, an analogy can be employed. Considering an order-n matrix $A$, if $\det(xA)\neq 0$, the number of solutions $a$ that make matrix $\det(aA)=0$ does not exceed $n$ because $\det(aA)=0$ is a polynomial of degree $n$. So the set of zeros that satisfy the equation $\det(aA+bB)=0$ consists of at most $n$ distinct points in the complex projective space, with each zero corresponding to a direction $(a,b)$, which means the number of linearly independent solutions $(a,b)$ that make matrix $(aA+bB)$ rank-deficient does not exceed $n$.
\end{proof}

To conclude this section, we propose two primary facts from linear algebra.
We shall refer to $R(X)$ as the range subspace of matrix $X$.

\begin{lemma}
\label{le:rankA=rankA1}
Suppose $A=\left[\begin{matrix}A_1&A_2\\A_3&A_4\\\end{matrix}\right]\in \bbM_{a,b}$ where $A_1\in \bbM_{c,d},A_2\in \bbM_{c,b-d},A_3\in \bbM_{a-c,d},A_4\in \bbM_{a-c,b-d}$. Then $\rank A=\rank A_1$ if and only if
\begin{align}
R(A)=R\begin{bmatrix}A_1\\A_3\\\end{bmatrix}\ and\ R(A_3^T)\subseteq R(A_1^T).
\end{align}
\end{lemma}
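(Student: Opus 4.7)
The plan is to break the single rank identity $\rank A=\rank A_1$ into the chain
\[
\rank A_1 \;\le\; \rank\bma A_1\\A_3\ema \;\le\; \rank A,
\]
which follows from Lemma \ref{lemma1}, and to match each of the two rungs with one of the two geometric conditions in the statement. Concretely, I would establish the two elementary equivalences
\[
R(A)=R\bma A_1\\A_3\ema \;\Longleftrightarrow\; \rank A=\rank\bma A_1\\A_3\ema,
\]
\[
R(A_3^T)\sue R(A_1^T) \;\Longleftrightarrow\; \rank\bma A_1\\A_3\ema=\rank A_1,
\]
and then glue them together: both geometric conditions hold exactly when the chain above collapses, i.e.\ exactly when $\rank A=\rank A_1$.

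For the first equivalence, the inclusion $R\bma A_1\\A_3\ema\sue R(A)$ is automatic because the left-hand block consists of the first $d$ columns of $A$. Two subspaces in such a containment relation coincide precisely when their dimensions---the two ranks---are equal, which gives the equivalence in both directions.

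For the second equivalence, I would reinterpret $R(A_3^T)$ and $R(A_1^T)$ as the row spaces of $A_3$ and $A_1$ (viewed as column vectors via transposition). The row space of the stacked matrix $\bma A_1\\A_3\ema$ is always the sum of these two row spaces; it collapses to $R(A_1^T)$ exactly when $R(A_3^T)\sue R(A_1^T)$, and since rank equals row-space dimension this is in turn equivalent to $\rank\bma A_1\\A_3\ema=\rank A_1$.

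No real obstacle is expected: the content is entirely the standard correspondence between rank equalities and column/row-space inclusions. The only step that requires mild care is the bookkeeping in the second equivalence---keeping track of the convention that $R(\cdot)$ denotes the range (column space), so that $R(X^T)$ plays the role of the row space of $X$---after which the argument reduces to the two one-line dimension computations above.
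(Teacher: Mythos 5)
Your proof is correct and rests on the same ideas as the paper's: the two conditions are matched, respectively, to the collapse of the automatic containments $R\begin{bmatrix}A_1\\A_3\end{bmatrix}\subseteq R(A)$ and $R(A_1^T)\subseteq R(A_1^T)+R(A_3^T)$, glued together by the sandwich $\rank A_1\le\rank\begin{bmatrix}A_1\\A_3\end{bmatrix}\le\rank A$. The only difference is cosmetic: the paper realizes the two collapses as explicit block column and row eliminations reducing $A$ to $\left[\begin{smallmatrix}A_1&O\\O&O\end{smallmatrix}\right]$, whereas you count dimensions directly, which has the minor benefit of making the converse direction (stated rather tersely in the paper) fully explicit.
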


\begin{proof}
(i) If $R(A)=R\begin{bmatrix}A_1\\A_3\\\end{bmatrix}$, each column of $A$ is a linear combination of the columns of $\begin{bmatrix}A_1\\A_3\\\end{bmatrix}$, which means each column of $\begin{bmatrix}A_2\\A_4\\\end{bmatrix}$ is a linear combination of the columns of $\begin{bmatrix}A_1\\A_3\\\end{bmatrix}$. Therefore there exists an invertible matrix $P_1$ of order $b$ such that $\left[\begin{matrix}A_1&A_2\\A_3&A_4\\\end{matrix}\right]P_1=\left[\begin{matrix}A_1&O\\A_3&O\\\end{matrix}\right]$. Then due to $R(A_3^T)\subseteq R(A_1^T)$, each column of $A_3^T$ is a linear combination of the columns of $A_1^T$. So there exists an invertible matrix $P_2$ of order $a$ such that $P_2\left[\begin{matrix}A_1&A_2\\A_3&A_4\\\end{matrix}\right]P_1=\left[\begin{matrix}A_1&O\\O&O\\\end{matrix}\right]$. In conclusion, $\rank A=\rank A_1$;

(ii) Conversely, if $\rank A=\rank A_1$, there exist invertible matrices $P_1$ of order $b$ and $P_2$ of order $a$ such that $P_2\left[\begin{matrix}A_1&A_2\\A_3&A_4\\\end{matrix}\right]P_1=\left[\begin{matrix}A_1&O\\O&O\\\end{matrix}\right]$. Thus each column of $\begin{bmatrix}A_2\\A_4\\\end{bmatrix}$ is a linear combination of the columns of $\begin{bmatrix}A_1\\A_3\\\end{bmatrix}$ and each row of $A_3$ is a linear combination of the columns of $A_1$, which means 
\begin{align*}
    R(A)=R\begin{bmatrix}A_1\\A_3\\\end{bmatrix}\ and\ R(A_3^T)\subseteq R(A_1^T).
\end{align*}
\end{proof}

\begin{lemma}
\label{le:rankA=rankB+rankC}
Suppose matrix $A=\left[\begin{matrix}B&C\\\end{matrix}\right]$. Then $\rank A= \rank B+\rank C$ if and only if $R(A)=R(B)+R(C)$. 
\end{lemma}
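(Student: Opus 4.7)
The plan is to apply the subspace dimension formula $\mathop{\mathrm{dim}}(U+V) = \mathop{\mathrm{dim}} U + \mathop{\mathrm{dim}} V - \mathop{\mathrm{dim}}(U \cap V)$, specialized to the column spaces involved. First I would record the elementary fact that when $A=[B\ C]$ is the horizontal concatenation of $B$ and $C$ (which must have the same number of rows), the column space $R(A)$ equals the subspace sum $R(B)+R(C)$ automatically, since every column of $A$ is either a column of $B$ or a column of $C$. Therefore the content of the stated equality $R(A)=R(B)+R(C)$ in the lemma is the additional assertion that the sum is direct, namely $R(B)\cap R(C)=\{0\}$; this is the non-trivial condition being encoded.

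Next I would apply the dimension formula with $U=R(B)$ and $V=R(C)$, using $\mathop{\mathrm{dim}} R(M)=\rank M$ to rewrite it in the form
\[
\rank A \;=\; \rank B + \rank C - \mathop{\mathrm{dim}}\bigl(R(B)\cap R(C)\bigr).
\]
Both directions of the ``iff'' follow immediately from this single identity. The equality $\rank A = \rank B + \rank C$ is equivalent to $\mathop{\mathrm{dim}}\bigl(R(B)\cap R(C)\bigr)=0$, i.e.\ to $R(B)\cap R(C)=\{0\}$, which is precisely the directness of the sum $R(B)+R(C)$. Conversely, the direct-sum condition substituted into the displayed identity yields rank-additivity at once.

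There is no substantive obstacle: the entire argument is a one-line application of the dimension formula, much in the spirit of (though simpler than) the preceding Lemma \ref{le:rankA=rankA1}. The only point requiring care is the correct reading of $R(A)=R(B)+R(C)$ as encoding a direct-sum decomposition of $R(A)$, rather than the trivially satisfied set-theoretic equality; once this convention is fixed, the proof is immediate and the resulting identity will be a convenient tool for decomposing column spaces of partially-transposed blocks in later arguments.
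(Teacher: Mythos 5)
Your proof is correct and takes essentially the same approach as the paper: both readings hinge on interpreting $R(A)=R(B)+R(C)$ as a direct-sum condition (since the set-theoretic equality is automatic for a horizontal concatenation) and on identifying ranks with dimensions of column spaces. If anything, your version is tighter — the paper's converse direction argues only that neither block's columns lie in the span of the other's, which by itself does not force $R(B)\cap R(C)=\{0\}$, whereas your single identity $\rank A=\rank B+\rank C-\dim\bigl(R(B)\cap R(C)\bigr)$ settles both directions at once.
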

\begin{proof}
(i) Firstly, if $R(A)=R(B)+R(C)$, the image space of $A$ is the direct sum of the image spaces of $B$ and $C$. Obviously we obtain $\rank A= \rank B+\rank C$;

(ii) Secondly, based on rank properties of matrices, for matrix $A=\left[\begin{matrix}B&C\\\end{matrix}\right]$ we have $\rank A\geq \rank B,\rank A\geq \rank C$. Thus when $\rank A= \rank B+\rank C$, respectively, neither the columns of $C$ lie in the span of columns of $B$, nor do the columns of $B$ lie in the span of columns of $C$. Therefore we obtain $R(A)=R(B)+R(C)$.
\end{proof}

\begin{lemma}
\label{le:xA+yB=singular}
Let $X=\sum_{i=1}^{k}{A_i\otimes B_i}\in\ \bbC^m\otimes \bbC^n$ where $A_i$ are linear independent matrices in $\bbC^m$ and $B_i$ are linear independent matrices in $\bbC^n$. Hence $SrX=k$. We have $X^\G=\sum_{i=1}^{k}{A_i^\G\otimes B_i}$, one can verify that $\rank X=\rank X^\G$ that can be demonstrated in the case that $k=1$.
\end{lemma}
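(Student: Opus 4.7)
The lemma as stated asserts $\rank X = \rank X^\Gamma$ specifically in the case $k=1$, since the author explicitly notes that the claim ``can be demonstrated in the case that $k=1$.'' My plan is therefore to specialize $X = A_1 \otimes B_1$ and deduce the equality from two elementary tensor-product identities, rather than attempt anything in the Schmidt-rank-$k$ generality. Such a generalization would in fact fail, since Lemma \ref{cj:1} only gives an inequality $\rank X^\Gamma \le Sr(X)\cdot \rank X$, and well-known entangled states saturate the gap between the two sides for $Sr\ge 2$.

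When $k=1$ we have $X = A_1 \otimes B_1$, and by the formula for the partial transpose recorded just before the lemma one obtains $X^\Gamma = A_1^T \otimes B_1$. Two standard linear-algebra facts then suffice: first, the rank of a Kronecker product factorizes multiplicatively as $\rank(A \otimes B) = \rank(A)\cdot\rank(B)$; second, for any matrix $A$ one has $\rank(A^T) = \rank(A)$. Chaining these identities gives
\begin{equation*}
\rank(X) \;=\; \rank(A_1)\cdot\rank(B_1) \;=\; \rank(A_1^T)\cdot\rank(B_1) \;=\; \rank(X^\Gamma),
\end{equation*}
which is the desired equality. There is essentially no obstacle here; the entire argument amounts to invoking these two identities in sequence.

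As an alternative route, which would fit the equality more tightly into the framework already developed in the paper, one may instead derive it as a corollary of Lemma \ref{cj:1}. Substituting $Sr(X)=1$ into $\rank X^\Gamma \le Sr(X)\cdot\rank X$ yields $\rank X^\Gamma \le \rank X$. Applying the same inequality to the matrix $X^\Gamma$ in place of $X$ -- which still has Schmidt rank one, since the partial transpose preserves Schmidt rank as noted in the excerpt, and whose partial transpose recovers $X$ -- supplies the reverse inequality $\rank X \le \rank X^\Gamma$. Combining the two gives equality. I would include this second derivation in the write-up as well, since it highlights that the $k=1$ case is precisely the tight end of the general estimate that drives the rest of the paper.
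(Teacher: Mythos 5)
Your proof is correct. Note that the paper itself supplies no proof of this lemma at all --- it is asserted with the phrase ``one can verify,'' and the verification is left implicit --- so there is no authorial argument to compare against; your two identities $\rank(A\otimes B)=\rank(A)\,\rank(B)$ and $\rank(A^T)=\rank(A)$ are exactly the intended elementary verification for $k=1$, and your second derivation (applying Lemma \ref{cj:1} to both $X$ and $X^\Gamma$ with $Sr=1$) is also valid since the partial transpose preserves Schmidt rank and is an involution. Your side remark that the equality $\rank X=\rank X^\Gamma$ genuinely fails for $k\ge 2$ is also correct and worth keeping, since the lemma's wording is ambiguous on this point and a reader might otherwise take the equality to hold for all $k$.
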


\begin{lemma}
\label{eq:K<=}
For any $M\in \bbM_{m_1,n_1}\ox\bbM_{m_2,n_2}$, suppose the following equation holds:
\bea
\label{eq:cj2=3}
r(M^{\G_B})= Sr(M)\cdot r(M), 
\eea
in addition, it is equivalent to the following equation holding:
\bea
\label{eq:cj2=4}
r \bigg( \sum^K_{i=1} R_i \ox S_i^T \bigg) = K\cdot r \bigg( \sum^K_{i=1} R_i \ox S_i \bigg), 
\eea
then we can derive the solution: 
$r \bigg( \sum^K_{i=1} R_i \ox S_i^T \bigg)$ can only take $1, 2, ...\min{(m_1n_2,m_2n_1)}$, which means for the positive integer $K$, the following inequality holds:
\bea
\label{eq:K<=m1n2,m2n1}
K\le \min{(m_1n_2,m_2n_1)}. 
\eea
\end{lemma}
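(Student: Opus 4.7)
The plan is to derive the bound directly from elementary dimensional considerations applied to the asserted equality $r(M^{\G_B}) = K \cdot r(M)$. The key observation is that $M^{\G_B}$ lives in $\bbM_{m_1,n_2} \otimes \bbM_{m_2,n_1}$ when viewed as a matrix (since the partial transpose on the second factor swaps its row and column dimensions), so as an ordinary matrix it has size $m_1 n_2 \times m_2 n_1$. Consequently $r(M^{\G_B}) \le \min(m_1 n_2, m_2 n_1)$.

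First I would spell out this ambient-dimension bound carefully, using the explicit decomposition $M^{\G_B}=\sum_{i=1}^{K} R_i \otimes S_i^T$ with $S_i^T \in \bbM_{n_2,m_2}$, so that $M^{\G_B}$ is indeed of format $m_1 n_2 \times m_2 n_1$ and therefore its rank is at most $\min(m_1 n_2, m_2 n_1)$. Next, I would note that since $K = Sr(M) \ge 1$ and the blocks $R_i$ and $S_i$ are linearly independent, $M$ is nonzero; hence $r(M) \ge 1$.

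Finally, substituting into the hypothesis \eqref{eq:cj2=3}, or equivalently \eqref{eq:cj2=4}, gives
\begin{equation}
K \;\le\; K \cdot r(M) \;=\; r(M^{\G_B}) \;\le\; \min(m_1 n_2, m_2 n_1),
\end{equation}
which is exactly \eqref{eq:K<=m1n2,m2n1}.

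There is essentially no obstacle here: the argument is a one-line dimension count once the partial transpose is interpreted correctly as a matrix of the stated shape. The only subtle point that deserves explicit mention is why $r(M) \ge 1$, i.e., why we are not in the trivial case $M=0$; this is ruled out by the standing assumption that $R_1,\dots,R_K$ and $S_1,\dots,S_K$ are linearly independent families with $K\ge 1$, guaranteed by writing $M$ in Schmidt form. No other result from the preceding lemmas is needed.
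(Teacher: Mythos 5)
Your proposal is correct: $M^{\G_B}=\sum_i R_i\otimes S_i^T$ is an $m_1n_2\times n_1m_2$ matrix, so $r(M^{\G_B})\le\min(m_1n_2,m_2n_1)$, and combining this with $r(M)\ge 1$ (which holds since $K=Sr(M)\ge 1$ forces $M\ne 0$) and the hypothesis $r(M^{\G_B})=K\cdot r(M)$ gives $K\le K\cdot r(M)=r(M^{\G_B})\le\min(m_1n_2,m_2n_1)$. The paper states this lemma without supplying any proof, and your dimension-count argument is exactly the elementary reasoning the statement implicitly relies on, so there is nothing further to reconcile.
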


\section{Condition for the equality: $n_1=1$}
\label{sec:condition=equality}

In this section, we investigate the matrix $M\in \bbM_{m_1,n_1}\ox\bbM_{m_2,n_2}$ for which the inequality in \eqref{cj:1} is saturated with $n_1=1$ in Lemma \ref{le:ANYm2,K>2}. Further we introduce the case of $m_j=n_j=2$ in Lemmas \ref{le:SR=2,Ri=Si=2X2} and \ref{le:SR=3+4,Ri=Si=2X2}. We shall assume $K>1$, because the inequality is always saturated for $K=1$.

We consider the column vectors $R_i\in \bbC^{m_1}$ and matrices $S_i\in \bbM_{m_2\times{n_2}}$ where $R_i, S_i$ are linear independent in the equality $ r \bigg( \sum^K_{i=1} R_i \ox S_i^T \bigg)
=
K\cdot r \bigg( \sum^K_{i=1} R_i \ox S_i \bigg)$, for $K>1$. Here, we need $m_1n_2\geq m_2$. Then the equality holds if the following condition is satisfied.

Let $A$ be the set of all integers dividing the integer $n$ satisfying $2\le n\le \min{\{m_1,m_2\}}$. We assume that
$$A=\{k_1,k_2,...,k_r\}, $$
in which the cardinality of set $A$ is $r$, and $2\le k_1< ... <k_{r-1} < k_r= \min{\{m_1,m_2\}}$. Using Lemma \ref{eq:K<=}, $K\le m_2 n_1=m_2$ can be any element of $A$. Then when $K=k_i, i=1,2,..r$, $\rank \begin{bmatrix}S_1\\S_2\\\vdots\\S_K\\\end{bmatrix}=d_{ij},j\in \bbN^*, k_id_{ij}\le m_2$ and $\rank \begin{bmatrix}S_1&S_2&\ldots&S_K\\\end{bmatrix}=k_id_{ij}$. 

Or more concisely, the above theorem can be restated in a simpler form as follows.

There exists an invertible matrix $Q\in \bbM_{n_2}$ such that 

\bea
\begin{bmatrix}S_1\\S_2\\\vdots\\S_K\\\end{bmatrix}Q=\begin{bmatrix}\vec{u_{11}}&\vec{u_{12}}&\ldots&\vec{u_{1d_{i}}}&0&\ldots&0\\\vec{u_{21}}&\vec{u_{22}}&\ldots&\vec{u_{2d_{i}}}&0&\ldots&0\\\vdots&\vdots&\ldots&\vdots&\vdots&\ldots&\vdots\\\vec{u_{k_i1}}&\vec{u_{k_i2}}&\ldots&\vec{u_{k_id_{i}}}&0&\ldots&0\\\end{bmatrix}, 
\eea

where $\vec u_{11},\vec{u_{12}},...\vec{u_{1d_{i}}},\vec{u_{21}},\vec{u_{22}},...\vec{u_{2d_{i}}},\vec{u_{k_i1}},\vec{u_{k_i2}},...\vec{u_{k_id_{i}}}\in \bbC^{m_2}$ are linearly independent column vectors such that $k_id_{ij}\times m_2$ matrix $\left[\begin{matrix}{\vec{u_{11}}}^T\\{\vec{u_{12}}}^T\\\vdots\\{\vec{u_{1d_{i}}}}^T\\{\vec{u_{21}}}^T\\\vdots\\\vec{u_{k_id_{i}}}^T\\\end{matrix}\right]$ is a matrix of full rank and for $K=k_i$, each row of $\left[\begin{matrix}\vec{u_{11}}&\vec{u_{12}}&\ldots&\vec{u_{1d_{i}}}&0&\ldots&0\\\vec{u_{21}}&\vec{u_{22}}&\ldots&\vec{u_{2d_{i}}}&0&\ldots&0\\\vdots&\vdots&\ldots&\vdots&\vdots&\ldots&\vdots\\\vec{u_{k_i1}}&\vec{u_{k_i2}}&\ldots&\vec{u_{k_id_{i}}}&0&\ldots&0\\\end{matrix}\right]$ is obtained by performing elementary column transformations on the column vectors of matrices $S_1,S_2,...S_K$. 

\begin{lemma}
\label{le:ANYm2,K>2}
Suppose the column vectors $R_i\in \bbC^{m_1}$ and matrices $S_i\in \bbM_{m_2\times{n_2}}$ where $R_i, S_i$ are linearly independent in the equality $ r \bigg( \sum^K_{i=1} R_i \ox S_i^T \bigg)
=
K\cdot r \bigg( \sum^K_{i=1} R_i \ox S_i \bigg)$. Then the equality holds if and only if the following condition is satisfied:

$m_1n_2\geq m_2$ and there exists an invertible matrix $Q\in \bbM_{n_2}$ such that 

\bea
\left[\begin{matrix}S_1\\S_2\\\vdots\\S_K\\\end{matrix}\right]Q=\left[\begin{matrix}\vec{u_{11}}&\vec{u_{12}}&\ldots&\vec{u_{1d_{i}}}&0&\ldots&0\\\vec{u_{21}}&\vec{u_{22}}&\ldots&\vec{u_{2d_{i}}}&0&\ldots&0\\\vdots&\vdots&\ldots&\vdots&\vdots&\ldots&\vdots\\\vec{u_{k_i1}}&\vec{u_{k_i2}}&\ldots&\vec{u_{k_id_{i}}}&0&\ldots&0\\\end{matrix}\right], 
\eea

where $u_{11},\vec{u_{12}},...\vec{u_{1d_{i}}},\vec{u_{21}},\vec{u_{22}},...\vec{u_{2d_{i}}},\vec{u_{k_i1}},\vec{u_{k_i2}},...\vec{u_{k_id_{i}}}\in \bbC^{m_2}$ are linearly independent column vectors such that $k_id_{i}\times m_2$ matrix $\left[\begin{matrix}{\vec{u_{11}}}^T\\{\vec{u_{12}}}^T\\\vdots\\{\vec{u_{1d_{i}}}}^T\\{\vec{u_{21}}}^T\\\vdots\\{u_{k_id_{i}}}^T\\\end{matrix}\right]$ is a matrix of full rank.

\end{lemma}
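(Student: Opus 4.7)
The plan is to normalize the $R_i$'s by a local transformation on the first factor so that the inequality $r(M^{\G_B})\le K\cdot r(M)$ reduces to a three-step chain of elementary rank inequalities among the $S_i$'s, and then read off the canonical form directly from the simultaneous saturation of the three steps.

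For the normalization, since $R_1,\dots,R_K\in\bbC^{m_1}$ are linearly independent there is an invertible $P\in\bbM_{m_1}$ with $PR_i=e_i$ for $i=1,\dots,K$. Replacing $M=\sum_i R_i\ox S_i$ by $(P\ox I_{m_2})M$ is a local transformation that preserves $r(M)$, and because it commutes with $\G_B$ in the sense $[(P\ox I_{m_2})M]^{\G_B}=(P\ox I_{n_2})M^{\G_B}$, it also preserves $r(M^{\G_B})$. So we may assume $R_i=e_i$, and then
\begin{equation*}
r(M)=r\bma S_1\\ \vdots\\ S_K\ema,\qquad r(M^{\G_B})=r\bma S_1&\cdots&S_K\ema.
\end{equation*}
Setting $d=r\bma S_1\\ \vdots\\ S_K\ema$, the row space of each $S_i$ embeds into the row space of the vertical stack, so $r(S_i)\le d$; hence
\begin{equation*}
r\bma S_1&\cdots&S_K\ema=\dim\bigl(R(S_1)+\cdots+R(S_K)\bigr)\le\sum_{i=1}^K r(S_i)\le Kd,
\end{equation*}
which is Lemma \ref{cj:1} in this setting. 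Saturation forces equality at every step: (i) $r(S_i)=d$ for every $i$, and (ii) the column spaces $R(S_1),\dots,R(S_K)\sue\bbC^{m_2}$ are in direct sum. Under (i), each $d$-dimensional row space $R(S_i^T)$ sits inside the $d$-dimensional ambient space $\sum_j R(S_j^T)$, so all $K$ row spaces must coincide with one common $d$-plane $V\sue\bbC^{n_2}$.

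To extract the canonical form, pick an invertible $Q\in\bbM_{n_2}$ sending $V$ to $\lin\{e_1,\dots,e_d\}$; then each $S_iQ$ has its last $n_2-d$ columns zero, and its first $d$ columns $\vec u_{i1},\dots,\vec u_{id}$ form a basis of $R(S_iQ)=R(S_i)$. Condition (ii) is then exactly the statement that $\vec u_{11},\dots,\vec u_{Kd}$ are linearly independent in $\bbC^{m_2}$, equivalently that the $Kd\times m_2$ matrix with rows $\vec u_{ij}^T$ has full rank $Kd$; this is the block form asserted in the lemma, and it forces $Kd\le m_2$, which together with the ambient bound $r(M^{\G_B})\le m_1n_2$ yields the necessary condition $m_1n_2\ge m_2$ in the saturated regime. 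The converse is a direct verification: given the canonical form, one reads off $r(M)=d$ and $r(M^{\G_B})=Kd$ by inspection. The only delicate point is bookkeeping the simultaneous equalities in the three-step chain, which the row-/column-space language handles cleanly, avoiding any explicit entry-wise case analysis.
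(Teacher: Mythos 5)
Your proof follows essentially the same route as the paper's: normalize the linearly independent $R_i$ to standard basis vectors by an invertible $P$ on the first factor, so that $r(M)$ becomes the rank of the vertical stack of the $S_i$ and $r(M^{\G_B})$ the rank of their horizontal concatenation, and then read the structure of the $S_i$ off that reduction. Where you improve on the paper is in making the saturation analysis explicit: the chain $r\left(\bma S_1&\cdots&S_K\ema\right)\le\sum_{i} r(S_i)\le Kd$, together with the observation that equality forces $r(S_i)=d$ for every $i$, a single common $d$-dimensional row space, and a direct sum of the column spaces, is exactly the content the paper gestures at (via its opaque detour through the divisor set $A=\{k_1,\dots,k_r\}$) without ever writing down; your extraction of the canonical form from these three conditions and your converse verification are both correct.

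The one step that fails is the derivation of $m_1n_2\ge m_2$. From $Kd\le m_2$ and $Kd\le r(M^{\G_B})\le m_1n_2$ you cannot conclude $m_1n_2\ge m_2$: both inequalities bound $Kd$ from above and say nothing about the relative sizes of $m_2$ and $m_1n_2$. To be fair, the paper merely asserts this condition in the preamble and never returns to it in its proof, and it does not appear to be genuinely necessary. For instance, take $K=m_1=2$, $n_2=1$, $m_2=4$, with $S_1,S_2\in\bbC^{4\times 1}$ linearly independent; then $r(M)=1$ and $r(M^{\G_B})=2$, so the equality is saturated and the canonical form holds with $d=1$, yet $m_1n_2=2<4=m_2$. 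So this is a defect of the lemma statement rather than of your overall argument, but your sentence claiming to derive it is a non-sequitur and should be removed or replaced by an honest remark that this clause cannot be established (and is apparently false as a necessary condition).
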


\begin{proof}
When $r \bigg( \sum^K_{i=1} R_i \ox S_i^T \bigg)=1$, the theorem holds obviously. Based on the above definition, we now discuss the case where $r \bigg( \sum^K_{i=1} R_i \ox S_i^T \bigg)=n$ and $2\le n\le \min{\{m_1,m_2\}}$. Let $A$ be the set of all integers dividing the integer $n$ satisfying $2\le n\le \min{\{m_1,m_2\}}$. We assume that
$$A=\{k_1,k_2,...,k_r\}, $$
in which the cardinality of set $A$ is $r$, and $2\le k_1< ... <k_{r-1} < k_r= \min{\{m_1,m_2\}}$. Using Lemma \ref{eq:K<=}, $K\le m_2 n_1=m_2$ can be any element of $A$. Then when $K=k_i, i=1,2,..r$, firstly, we discuss $r \bigg( \sum^K_{i=1} R_i \ox S_i \bigg)$. For the matrix $\bigg( \sum^K_{i=1} R_i \ox S_i \bigg)$, there exists an invertible matrix $P$ that is order-$m_1$ such that $PR_j=\vec{e_j}\in \bbC^{m_1}$ where $\vec{e_j}$ is an an orthonormal basis in $\bbC^{m_1}$ because $R_j, j=1,2,...k_i$ are linear independent, $i=1,2,...k_i$. From the properties of the Kronecker product of matrices, we can derive that $r \bigg( \sum^K_{i=1} R_i \ox S_i \bigg)=r ((P\ox I_{m_2})\bigg( \sum^K_{i=1} R_i \ox S_i \bigg)(1\ox I_{n_2}))=r ((PR_1 )\ox (I_{m_2}S_1I_{n_2})+(PR_2  )\ox (I_{m_2}S_2I_{n_2})+...+(PR_K )\ox (I_{m_2}S_KI_{n_2}))=r\begin{bmatrix}S_1\\S_2\\\vdots\\S_K\\O\\\end{bmatrix}=r\begin{bmatrix}S_1\\S_2\\\vdots\\S_K\\\end{bmatrix}$, which means the rank of the $k_im_2\times n_2$ matrix $r\begin{bmatrix}S_1\\S_2\\\vdots\\S_K\\\end{bmatrix}=d_i$ where $k_id_i=n$. 

Secondly, we discuss $r \bigg( \sum^K_{i=1} R_i \ox S_i^T \bigg)$. Respectively, for the matrix $\bigg( \sum^K_{i=1} R_i \ox S_i^T \bigg)$, we have the invertible matrix $P$ that is order-$m_1$ such that $r \bigg( \sum^K_{i=1} R_i \ox S_i^T \bigg)=r ((P\ox I_{n_2})(R_1 \ox S_1^T + R_2 \ox S_2^T +...+  R_K \ox S_K^T)(1\ox I_{m_2}))=r ((PR_1 )\ox (I_{n_2}S_1^TI_{m_2})+(PR_2  )\ox (I_{n_2}S_2^TI_{m_2})+...+(PR_2  )\ox (I_{n_2}S_K^TI_{m_2}))=r\begin{bmatrix}S_1^T\\S_2^T\\\vdots\\S_K^T\\O\\\end{bmatrix}=r\begin{bmatrix}S_1^T\\S_2^T\\\vdots\\S_K^T\\\end{bmatrix}$, which means the rank of the $k_in_2\times m_2$ matrix $r\begin{bmatrix}S_1^T\\S_2^T\\\vdots\\S_K^T\\\end{bmatrix}=k_id_i$ where $k_id_i=n$. 

Besides, $S_i, i=1,2,...K$ are linear independent, which means $\rank S_1 = \rank S_2 = ...=\rank S_K=d$. Respectively, there exists an invertible matrix $Q\in \bbM_{n_2}$ such that 

\bea
\left[\begin{matrix}S_1\\S_2\\\vdots\\S_K\\\end{matrix}\right]Q=\left[\begin{matrix}\vec{u_{11}}&\vec{u_{12}}&\ldots&\vec{u_{1d_{i}}}&0&\ldots&0\\\vec{u_{21}}&\vec{u_{22}}&\ldots&\vec{u_{2d_{i}}}&0&\ldots&0\\\vdots&\vdots&\ldots&\vdots&\vdots&\ldots&\vdots\\\vec{u_{k_i1}}&\vec{u_{k_i2}}&\ldots&\vec{u_{k_id_{i}}}&0&\ldots&0\\\end{matrix}\right], 
\eea

where $u_{11},\vec{u_{12}},...\vec{u_{1d_{i}}},\vec{u_{21}},\vec{u_{22}},...\vec{u_{2d_{i}}},\vec{u_{k_i1}},\vec{u_{k_i2}},...\vec{u_{k_id_{i}}}\in \bbC^{m_2}$ are linearly independent column vectors such that $k_id_{i}\times m_2$ matrix $\left[\begin{matrix}{\vec{u_{11}}}^T\\{\vec{u_{12}}}^T\\\vdots\\{\vec{u_{1d_{i}}}}^T\\{\vec{u_{21}}}^T\\\vdots\\{u_{k_id_{i}}}^T\\\end{matrix}\right]$ is a matrix of full rank. \\

In particular, we have that the equality holds if the following condition is satisfied:

$K=2$, $\rank\begin{bmatrix}S_1\\S_2\\\end{bmatrix}=d$ and $\rank\begin{bmatrix}S_1&S_2\\\end{bmatrix}=2d$ where $d=1,2,...2m$ and $2m$ is the largest even number not exceeding $m_2$. Respectively, there exists an invertible matrix $Q\in \bbM_{n_2}$ such that 

\bea
\left[\begin{matrix}S_1\\S_2\\\end{matrix}\right]Q=\left[\begin{matrix}\vec{u_1}&\vec{u_2}&\ldots&\vec{u_d}&0&\ldots&0\\\vec{v_1}&\vec{v_2}&\ldots&\vec{v_d}&0&\ldots&0\\\end{matrix}\right], 
\eea

where $\vec{u_1},\vec{u_2},...\vec{u_d},\vec{v_1},\vec{v_2},...\vec{v_d}\in \bbC^{m_2}$ are linearly independent column vectors such that $\left[\begin{matrix}{\vec{u_1}}^T\\{\vec{u_2}}^T\\\vdots\\{\vec{u_d}}^T\\{\vec{v_1}}^T\\\vdots\\{\vec{v_d}}^T\\\end{matrix}\right]$ is a matrix of full rank. These column vectors $\vec{u_1},\vec{u_2},...\vec{u_d},\vec{v_1}\in \bbC^{m_2}$ are precisely the columns of the new matrix obtained after performing column transformations of $Q$ on the original matrix $S_1$ and $\vec{v_1},\vec{v_2},...\vec{v_d}\in \bbC^{m_2}$ are precisely the columns of the new matrix obtained after performing column transformations of $Q$ on the original matrix $S_2$. 

Furthermore, the necessity is clear.
\end{proof}
On the other hand, if $R_i$ is a row vector, then one can perform the transpose on the matrix $\sum_j R_j \otimes S_j$ and $\sum_j R_j \otimes S_j^T$ at the same time. One can obtain the equality condition using Theorem \ref{le:ANYm2,K>2}.
A similar discussion applies when $S_i$ is a row or column vector due to the switch of $R_i$ and $S_i$. Hence, we have finished the study over the equality condition of \eqref{eq:cj2} when $R_i$ or $S_i$ has rank one. In the next subsection, we shall investigate more complex cases. The simplest one occurs when $R_i$ and $S_i$ are both order-two.

\section{Condition for the equality: $m_1=n_1=2$}
\label{sec:2x2}

In this section, we investigate the equality condition with $2\times2$ matrices $R_j$'s in Theorem \ref{cj:1}. We firstly study the condition with $2\times2$ matrices $S_i$'s and Schmidt-rank-two matrix $X$ in Lemma \ref{le:SR=2,Ri=Si=2X2}. Then we show in Lemma \ref{le:SR=3+4,Ri=Si=2X2} that, if the Schmidt rank of order-four matrix $X$ has rank three, then the inequality in Theorem \ref{cj:1} is strict. On the other hand if the Schmidt rank of order-four matrix $X$ has rank four, then the inequality in Theorem \ref{cj:1} is saturated if and only if $X$ has rank one. Then we extend these facts to matrices $S_i$'s of any sizes in Lemmas \ref{le:SR=2,Ri=2X2}-\ref{le:m1=n1=2,sr=4}.

\begin{lemma}
\label{le:SR=2,Ri=Si=2X2}
Let $R_1,R_2\in\bbM_2$ be linearly independent, and $S_1,S_2\in\bbM_2$ be also linearly independent in  the following equation.
\begin{eqnarray}
\label{eq:SR=2,Ri=Si=2X2}
\rank(R_1\otimes S_1^T+R_2\otimes S_2^T)=2\cdot\rank(R_1\otimes S_1+R_2\otimes S_2).
\end{eqnarray}  
The equation holds if and only if the rank is two, and one of the following two cases hold.
\begin{align}
    & (i) R_1 = \begin{bmatrix} 1 & 0 \\ 0 & 0 \end{bmatrix}, \quad 
      R_2 = \begin{bmatrix} 0 & 1 \\ 0 & 0 \end{bmatrix}, \quad
      \rank[S_1, S_2] = 1, \quad \rank[S_1^T, S_2^T] = 2; \label{eq:case1} \\
    & (ii) R_1 = \begin{bmatrix} 1 & 0 \\ 0 & 0 \end{bmatrix}, \quad 
      R_2 = \begin{bmatrix} 0 & 0 \\ 1 & 0 \end{bmatrix}, \quad
      \rank[S_1, S_2] = 2, \quad \rank[S_1^T, S_2^T]= 1. \label{eq:case2}
\end{align}\\
\end{lemma}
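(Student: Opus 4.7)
The plan is to classify $(R_1,R_2)$ up to the equivalences that preserve both sides of equation~\eqref{eq:SR=2,Ri=Si=2X2}, then analyze each canonical case. Setting $X=R_1\otimes S_1+R_2\otimes S_2$, the ranks $\rank X$ and $\rank X^{\G_B}$ are invariant under $R_i\mapsto P_1 R_i Q_1$, $S_i\mapsto P_2 S_i Q_2$ for invertible $P_1,Q_1,P_2,Q_2$, and under any basis change of $\mathrm{span}(R_1,R_2)$ (which leaves $X$ fixed once the corresponding change is absorbed into $S_1,S_2$). The relevant datum is therefore the two-dimensional subspace $V=\mathrm{span}(R_1,R_2)\subseteq\bbM_2$ up to $V\mapsto P_1 V Q_1$ with $P_1,Q_1\in\GL_2$. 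The restriction of $\det$ to $V\cong\bbC^2$ is a binary quadratic form, and is either non-degenerate, a non-zero perfect square, or identically zero. Combining this with the fact (readily derived from Lemma~\ref{mtimesnA&B}) that any two-dimensional subspace of singular matrices in $\bbM_2$ has a common column range or a common row range, I obtain exactly four canonical representatives: (a1)~$V=\mathrm{span}(E_{11},E_{22})$, (a2)~$V=\mathrm{span}(I_2,E_{12})$, (b1)~$V=\mathrm{span}(E_{11},E_{12})$, and (b2)~$V=\mathrm{span}(E_{11},E_{21})$.

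I would then compute both ranks in each case. Case (a1) makes $X$ and $X^{\G_B}$ block diagonal with diagonal blocks $(S_1,S_2)$ and $(S_1^T,S_2^T)$ respectively, so $\rank X^{\G_B}=\rank X$, and the equation forces $\rank X=0$, contradicting the linear independence of $S_1,S_2$. In case (a2), after reducing $S_1$ to $I_2$ or $E_{11}$ by an $S$-side local operation, a routine $4\times 4$ calculation shows $\rank X^{\G_B}=\rank X\in\{2,3,4\}$ in every subcase, so~\eqref{eq:SR=2,Ri=Si=2X2} fails. In case (b1), $X$ has nonzero entries only in its top block row, so $\rank X=\rank[S_1\ S_2]$ and $\rank X^{\G_B}=\rank[S_1^T\ S_2^T]$; the equation becomes $\rank[S_1^T\ S_2^T]=2\rank[S_1\ S_2]$ with both ranks at most $2$, whose only solution is $\rank[S_1\ S_2]=1$ and $\rank[S_1^T\ S_2^T]=2$. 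Translating back through the reductions, this recovers case~(i) of the lemma, and case (b2) is the mirror argument yielding case~(ii). The converse direction---that cases (i) and (ii) satisfy~\eqref{eq:SR=2,Ri=Si=2X2}---is immediate from the block forms, producing $\rank X=1$ and $\rank X^{\G_B}=2$, which is exactly the ``rank is two'' clause of the statement.

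The principal obstacle is the orbit classification of two-dimensional subspaces of $\bbM_2$ under $\GL_2\times\GL_2$, and in particular cleanly separating the purely singular orbits (b1) and (b2); once the four canonical forms are in hand, the remaining computations are elementary block-rank bookkeeping. A minor but recurring task is to verify that each local reduction preserves the Schmidt-rank-two hypothesis, i.e.\ the linear independence of both $\{R_1,R_2\}$ and $\{S_1,S_2\}$, together with the values of $\rank X$ and $\rank X^{\G_B}$.
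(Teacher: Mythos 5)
Your proposal is correct and follows essentially the same route as the paper: reduce the pair $(R_1,R_2)$ to canonical forms under local equivalence, identify the same four cases (block-diagonal span, a degenerate-but-nonzero case, the common-column-space span, and the common-row-space span), and finish with block-rank bookkeeping on $X$ and $X^{\G_B}$. The only differences are cosmetic --- you organize the case split via the rank of $\det$ restricted to $\mathrm{span}(R_1,R_2)$ and use the representative $\mathrm{span}(I_2,E_{12})$ where the paper uses the locally equivalent $\mathrm{span}(E_{11},E_{12}+E_{21})$ (for which the paper's observation $X^{\G_B}=X^T$ disposes of that case slightly faster than your ``routine $4\times4$ calculation'').
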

\begin{proof}
In \eqref{eq:SR=2,Ri=Si=2X2}, under the action of invertible matrices $P, Q \in \mathrm{M}_2$ such that ($R_i \mapsto P R_i Q $) and linear combinations ($R_i \mapsto a R_1 + b R_2 $ ), we may fix: $R_1 = \begin{bmatrix} 1 & 0 \\ 0 & 0 \end{bmatrix}, \quad \text{with } R_2 \text{ being a non-symmetric matrix}. $ In this case, the general form of $R_2$ is $R_2 = \begin{bmatrix} 0 & b \\ c & d \end{bmatrix}, \quad \text{where } b \neq c.$

We use Lemma \ref{le:X=A+yB} and \eqref{eq:SR=2,Ri=Si=2X2} to derive that the span of $R_1$ and $R_2$ has a singular matrix of order two. Up to equivalence and rename of $R_j,S_j$, we may regard the matrix as $R_1=\bma 1 & 0 \\ 0 & 0 \ema$, and obtain $R_2=\bma 0 & b \\ c & d \ema$. If $d\ne0$ then up to equivalence we can assume that $b=c=0$ and $d=1$. On the other hand if $d=0$ then up to equivalence we have three cases, namely (i) $b=1,c=0$, (ii) $b=0,c=1$, and (iii) $b=c=1$. 

Let $d\ne0$. We have $R_1=\left[\begin{matrix}1&0\\0&0\\\end{matrix}\right],R_2=\left[\begin{matrix}0&0\\0&1\\\end{matrix}\right]$. Then we have $R_1\otimes S_1^\G+R_2\otimes S_2^\G=\left[\begin{matrix}{S_1^\G}&O\\O&{S_2^\G}\\\end{matrix}\right]$ where $\left[\begin{matrix}{S_1^\G}&O\\O&{S_2^\G}\\\end{matrix}\right]$ is a $4\times 4$ matrix. Besides, $R_1\otimes S_1+R_2\otimes S_2=\left[\begin{matrix}{S_1}&O\\O&{S_2}\\\end{matrix}\right]$ where $\left[\begin{matrix}{S_1}&O\\O&{S_2}\\\end{matrix}\right]$ is a $4\times 4$ matrix. It is obvious that the two matrices have the same rank and fail to satisfy \eqref{eq:SR=2,Ri=Si=2X2}. It remains to prove the case (i)-(iii) below \eqref{eq:SR=2,Ri=Si=2X2}. We respectively discuss them as follows.

(i) Using \eqref{eq:SR=2,Ri=Si=2X2}, we have $R_1=\left[\begin{matrix}1&0\\0&0\\\end{matrix}\right],R_2=\left[\begin{matrix}0&1\\0&0\\\end{matrix}\right]$. Now we have $\rank(X^\G)\le2$ and $\rank(X)\le 2$ where $X^\G$ is $R_1\otimes S_1^T+R_2\otimes S_2^T$ and $X$ is $R_1\otimes S_1+R_2\otimes S_2$. Therefore, the inequality in \eqref{cj:1} is saturated if and only if $\rank(X^\G)=2, \rank(X)= 1$. Thus,
$R_1\otimes S_1^T+R_2\otimes S_2^T=\left[\begin{matrix}{S_1^\G}&{S_2^T}\\O&O\\\end{matrix}\right]$ where $\left[\begin{matrix}{S_1^T}&{S_2^T}\\O&O\\\end{matrix}\right]$ is a $4\times 4$ matrix, 
and
$R_1\otimes S_1+R_2\otimes S_2=\left[\begin{matrix}{S_1}&{S_2}\\O&O\\\end{matrix}\right]$ where $\left[\begin{matrix}{S_1}&{S_2}\\O&O\\\end{matrix}\right]$ is a $4\times 4$ matrix.

Besides, there exists a $2\times 2$ invertible matrix $W$ such that $WS_1=\left[\begin{matrix}a_{11}&a_{12}\\0&0\\\end{matrix}\right],WS_2=\left[\begin{matrix}b_{11}&b_{12}\\0&0\\\end{matrix}\right]$. Hence,
$\rank(R_1\otimes S_1^T+R_2\otimes S_2^T)=\rank(\left[\begin{matrix}{\left[\begin{matrix}{a_{11}}&0\\{a_{12}}&0\\\end{matrix}\right]}&{\left[\begin{matrix}{b_{11}}&0\\{b_{12}}&0\\\end{matrix}\right]}\\0&0\\\end{matrix}\right])=2$,
and
$\rank(R_1\otimes S_1+R_2\otimes S_2)=\rank(\left[\begin{matrix}{\left[\begin{matrix}{a_{11}}&{a_{12}}\\0&0\\\end{matrix}\right]}&{\left[\begin{matrix}{b_{11}}&{b_{12}}\\0&0\\\end{matrix}\right]}\\0&0\\\end{matrix}\right])=1$ in which $a_{11}b_{12}-a_{12}b_{11}\neq 0$.

(ii) In \eqref{eq:SR=2,Ri=Si=2X2}, we have $R_1=\left[\begin{matrix}1&0\\0&0\\\end{matrix}\right],R_2=\left[\begin{matrix}0&0\\1&0\\\end{matrix}\right]$. Similar to (i), the question can be transformed to a simple one as follows. 

$\rank(R_1\otimes S_1^T+R_2\otimes S_2^T)=\rank\left[\begin{matrix}{S_1^T}&O\\{S_2^T}&O\\\end{matrix}\right]=\rank\left[\begin{matrix}{\left[\begin{matrix}{a_{11}}&{a_{12}}\\0&0\\\end{matrix}\right]}&O\\{\left[\begin{matrix}{b_{11}}&{b_{12}}\\0&0\\\end{matrix}\right]}&O\\\end{matrix}\right]=2$,
and
$\rank(R_1\otimes S_1+R_2\otimes S_2)=\rank\left[\begin{matrix}{S_1}&O\\{S_2}&O\\\end{matrix}\right])=\rank(\left[\begin{matrix}{\left[\begin{matrix}{a_{11}}&0\\{a_{12}}&0\\\end{matrix}\right]}&O\\{\left[\begin{matrix}{b_{11}}&0\\{b_{12}}&0\\\end{matrix}\right]}&O\\\end{matrix}\right]=1$ in which $a_{11}b_{12}-a_{12}b_{11}\neq 0$.\\

(iii) In \eqref{eq:SR=2,Ri=Si=2X2}, we have
$R_1=\left[\begin{matrix}1&0\\0&0\\\end{matrix}\right],R_2=\left[\begin{matrix}0&1\\1&0\\\end{matrix}\right]$. One can show that $\rank X=\rank X^\G$, and \eqref{eq:SR=2,Ri=Si=2X2} does not hold. 

Conversely, if one of \eqref{eq:case1} and \eqref{eq:case2}  holds, then the equation necessarily hold.

So 
\eqref{eq:SR=2,Ri=Si=2X2} holds if and only if the rank is two, and one of \eqref{eq:case1} and \eqref{eq:case2} cases hold. \\

We conclude the proof so far.
Let $R_1,R_2\in\bbM_2$ be linearly independent, and $S_1,S_2\in\bbM_2$ be also linearly independent in  \eqref{eq:SR=2,Ri=Si=2X2}. Then the equation holds if and only if the rank is two, and one of the following two cases hold. 

(i) $R_1=\left[\begin{matrix}1&0\\0&0\\\end{matrix}\right],R_2=\left[\begin{matrix}0&1\\0&0\\\end{matrix}\right]$ and there exists an invertible matrix $W\in \bbM_{2}$ such that 

\begin{equation*}
W\left[\begin{matrix}S_1&S_2\\\end{matrix}\right]=\left[\begin{matrix}a_{11}&a_{12}&b_{11}&b_{12}\\0&0&0&0\\\end{matrix}\right], 
\end{equation*}

and $a_{11}b_{12}-a_{12}b_{11}\neq 0$;

(ii) $R_1=\left[\begin{matrix}1&0\\0&0\\\end{matrix}\right],R_2=\left[\begin{matrix}0&0\\1&0\\\end{matrix}\right]$, and there exists an invertible matrix $Q\in \bbM_{2}$ such that 

\begin{equation*}
\left[\begin{matrix}S_1\\S_2\\\end{matrix}\right]Q=\left[\begin{matrix}a_{11}&0\\a_{21}&0\\b_{11}&0\\b_{21}&0\\\end{matrix}\right], 
\end{equation*}

and $a_{11}b_{21}-a_{21}b_{11}\neq 0$.\\
\end{proof}

Next we extend the above lemma to order-four matrices of Schmidt rank three and four. 

\begin{lemma}
\label{le:SR=3+4,Ri=Si=2X2}  
(i) If the Schmidt rank of order-four matrix $X$ has rank three, then the inequality in Theorem \ref{cj:1} is strict.   

(ii) If the Schmidt rank of order-four matrix $X$ has rank four, then the inequality in Theorem \ref{cj:1} is saturated if and only if $X$ has rank one.   
\end{lemma}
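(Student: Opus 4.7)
The plan is to exploit that $X$ is a $4\times 4$ matrix, so both $r(X)$ and $r(X^{\G_B})$ are at most $4$; this sharply constrains when the bound $r(X^{\G_B})\le K\cdot r(X)$ of Lemma \ref{cj:1} can be saturated.

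For part (i), with $K=3$, I would first rule out $r(X)=1$ being compatible with $Sr(X)=3$. Suppose $X=uv^T$ with $u,v\in\bbC^4$, partitioned as $u=\bigl(\begin{smallmatrix}u_1\\u_2\end{smallmatrix}\bigr)$ and $v=\bigl(\begin{smallmatrix}v_1\\v_2\end{smallmatrix}\bigr)$ with $u_i,v_j\in\bbC^2$. Writing $X$ in the standard block decomposition of $\bbM_4=\bbM_2\ox\bbM_2$, the $2\times 2$ blocks are $T_{ij}=u_iv_j^T$, and $Sr(X)=\dim\lin\{T_{11},T_{12},T_{21},T_{22}\}$. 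Setting $U=[u_1,u_2]$ and $V=[v_1,v_2]$, this span equals $\{UAV^T: A\in\bbM_2\}$, whose dimension is $\rank(U)\cdot\rank(V)\in\{0,1,2,4\}$. Since $3$ is not in this list, $r(X)\ge 2$, and then $3r(X)\ge 6>4\ge r(X^{\G_B})$, so the inequality is strict.

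For part (ii), with $K=4$, the four matrices $R_1,\dots,R_4$ form a basis of $\bbM_2$. Using Definition \ref{def:equivalent} together with the basis change $R_k\mapsto\sum_i c_{ki}R_i$ (compensated by the inverse change on the $S_k$), I would reduce to $R_k=E_{ij}$, so that $X$ becomes the $2\times 2$ block matrix $[S_{ij}]_{i,j=1,2}$ with linearly independent $S_{ij}$. Since $r(X^{\G_B})\le 4$, saturation $r(X^{\G_B})=4\cdot r(X)$ forces $r(X)=1$. Conversely, if $r(X)=1$, writing $X=uv^T$ yields $S_{ij}=u_iv_j^T$, and by the dimension count from part (i) the four $S_{ij}$ are linearly independent exactly when the pairs $u_1,u_2$ and $v_1,v_2$ are each independent. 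A direct calculation then identifies the columns of $X^{\G_B}=[S_{ij}^T]$ with the vectors $\tilde u_k\ox v_j$, where $\tilde u_k$ is the $k$-th row of $[u_1,u_2]$ viewed as a column; these four vectors span $\bbC^2\ox\bbC^2=\bbC^4$, so $r(X^{\G_B})=4$ and the inequality saturates.

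The main obstacle is justifying the reduction to $R_k=E_{ij}$ in part (ii) without losing generality, and keeping track of how the rank-one factorization $X=uv^T$ interacts with $X^{\G_B}$ under this change of basis. The decisive structural observation used in both parts is that for a rank-one $X$ the Schmidt rank factors as $\rank(U)\cdot\rank(V)$ and hence can never equal $3$, which collapses part (i) once this factorization is in hand and pins down the rank-one structure in part (ii).
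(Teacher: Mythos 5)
Your proof is correct, and it takes a genuinely different route from the paper's. The paper handles part (i) by first normalizing the span of the $R_j$'s (as in the proof of Lemma \ref{le:SR=2,Ri=Si=2X2}) to reduce $X$ to one of two canonical forms $X_1,X_2$, then noting that saturation would force $\rank X=1$ and asserting that this is incompatible with the block expressions of the nonzero $S_j$'s; part (ii) is dismissed as ``similar.'' You instead bypass all normal forms for the $R_j$'s with a single structural observation: for a rank-one $X=uv^T$ with $2\times2$ blocks $u_iv_j^T$, the Schmidt rank equals $\rank(U)\cdot\rank(V)\in\{0,1,2,4\}$, so Schmidt rank $3$ forces $r(X)\ge 2$ and hence $3r(X)\ge 6>4\ge r(X^{\G_B})$. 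This is cleaner and closes the gap the paper leaves to the reader. For part (ii) you also supply the converse direction explicitly (that a rank-one, Schmidt-rank-four $X$ has $r(X^{\G_B})=4$, via the identification of the columns of $X^{\G_B}$ with the products $\tilde u_l\ox v_j$, equivalently via the swap operator), which the paper omits entirely. The ``obstacle'' you flag about reducing to $R_k=E_{ij}$ is not actually an issue: since Schmidt rank, rank, and the partial transpose are all defined through the standard block decomposition $X=\sum_{ij}\ketbra{i}{j}\ox X_{ij}$, you may take $R_k=E_{ij}$ and $S_k=X_{ij}$ with no loss of generality and no local equivalence needed. In short, your argument is complete and more self-contained than the paper's.
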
 
\begin{proof}
(i) We follow the argument in the first paragraph of proof of Lemma \ref{le:SR=2,Ri=Si=2X2}. We obtain that $X$ is equivalent to one of the following two cases, 
\begin{eqnarray}
&&
X_1=\bma 1&0\\0&0\ema \otimes S_1+ 
\bma 0&0\\0&1\ema \otimes S_2+
\bma 0& 1\\1&0\ema \otimes S_3,
\\&&
X_2=\bma 1&0\\0&0\ema \otimes S_1+ 
\bma 0&1\\0&0\ema \otimes S_2+
\bma 0& 0\\1&0\ema \otimes S_3.
\end{eqnarray}
To make the inequality in Theorem \ref{cj:1} saturated, we obtain that $\rank X=1$. One can show that this is impossible by the expression of nonzero $S_j$'s in $X_1$ and $X_2$.

(ii) The proof is similar to that of (i).   
\end{proof}

\subsection{$m_2\times n_2$ matrices $S_i$}
\label{sec:2xd}

In this subsection, we study \eqref{eq:SR=2,Ri=Si=2X2} when $R_i$ are $2\times2$ matrices, and $S_i$ are $m_2\times n_2$ matrices.

\begin{lemma}
\label{le:SR=2,Ri=2X2}
Let $R_1,R_2\in\bbM_2$ be linearly independent, and $S_1,S_2\in\bbM_{m_2,n_2}$ be also linearly independent in the equation \eqref{eq:SR=2,Ri=Si=2X2}. It holds if and only if  one of the following two cases hold. 

(i) $\rank \left[\begin{matrix}S_1\\S_2\\\end{matrix}\right]=d$ and $\rank\left[\begin{matrix}S_1&S_2\\\end{matrix}\right]=2d$ where $d=1,2,...2m$ and $2m$ is the largest even number not exceeding $m_2$. There is an invertible matrix $Q\in \bbM_{n_2}$ such that 

\bea
\left[\begin{matrix}S_1\\S_2\\\end{matrix}\right]Q=\left[\begin{matrix}\vec{u_1}&\vec{u_2}&\ldots&\vec{u_d}&0&\ldots&0\\\vec{v_1}&\vec{v_2}&\ldots&\vec{v_d}&0&\ldots&0\\\end{matrix}\right], 
\eea

where $\vec{u_1},\vec{u_2},...\vec{u_d},\vec{v_1},\vec{v_2},...\vec{v_d}\in \bbC^{m_2}$ are linearly independent column vectors.

(ii) $\rank \left[\begin{matrix}S_1&S_2\\\end{matrix}\right]=d$ and $\rank \left[\begin{matrix}S_1\\S_2\\\end{matrix}\right]=2d$, where $d=1,2,...2m$ and $2m$ is the largest even number not exceeding $n_2$. There is an invertible matrix $W\in \bbM_{m_2}$ such that

\bea W\left[\begin{matrix}S_1&S_2\\\end{matrix}\right]=\left[\begin{matrix}{\vec{u_1}}^T&{\vec{v_1}}^T\\{\vec{u_2}}^T&{\vec{v_2}}^T\\\vdots&\vdots\\{\vec{u_d}}^T&{\vec{v_d}}^T\\0&0\\\vdots&\vdots\\0&0\\\end{matrix}\right], 
\eea

where $\vec{u_1},\vec{u_2},...\vec{u_d},\vec{v_1},\vec{v_2},...\vec{v_d}\in \bbC^{n_2}$ are linearly independent column vectors.
\end{lemma}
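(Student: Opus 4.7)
The plan is to mimic the strategy of Lemma~\ref{le:SR=2,Ri=Si=2X2}. First, since the desired conclusion concerns only the $S_i$'s, I would apply a local equivalence acting solely on the $R$-factor (invertible $P,Q_R\in\bbM_2$ together with linear recombinations) to reduce $(R_1,R_2)$ to canonical pairs. The pencil $aR_1+bR_2$ always contains a singular nonzero element, since $\det(aR_1+bR_2)$ is a degree-two polynomial in $(a,b)$ over $\bbC$. After normalization one may therefore assume $R_1=\bma 1&0\\0&0\ema$, with $R_2$ equal to one of $\bma 0&0\\0&1\ema$, $\bma 0&1\\1&0\ema$, $\bma 0&1\\0&0\ema$, or $\bma 0&0\\1&0\ema$, exactly as in the proof of Lemma~\ref{le:SR=2,Ri=Si=2X2}. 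The $S_i$'s are untouched by this reduction, so any structural conclusion below transfers directly to the original data.

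Next, I would rule out the first two configurations. For $R_2=\bma 0&0\\0&1\ema$, both $X$ and $X^{\Gamma_B}$ are block-diagonal and share the rank $\rank S_1+\rank S_2$. For $R_2=\bma 0&1\\1&0\ema$, a direct calculation shows $X^{\Gamma_B}=X^T$, so again $\rank X^{\Gamma_B}=\rank X$. In either case the target equality forces $\rank X=0$, contradicting the linear independence of the $R_i$ and the $S_i$.

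The two surviving configurations yield cases (i) and (ii) of the lemma, respectively. For $R_2=\bma 0&0\\1&0\ema$, one computes $X=\bma S_1&O\\S_2&O\ema$ and $X^{\Gamma_B}=\bma S_1^T&O\\S_2^T&O\ema$, giving $\rank X=\rank\bma S_1\\S_2\ema$ and $\rank X^{\Gamma_B}=\rank[S_1,S_2]$; the target equality becomes $\rank[S_1,S_2]=2\rank\bma S_1\\S_2\ema$. Setting $d:=\rank\bma S_1\\S_2\ema$ and choosing invertible $Q\in\bbM_{n_2}$ that moves $d$ basis columns of the stack to the front, I would write those columns as $(u_i,v_i)^T$ with $u_i,v_i\in\bbC^{m_2}$; then $S_1Q$ and $S_2Q$ take precisely the form displayed in case (i), and $\rank[S_1Q,S_2Q]=2d$ becomes exactly linear independence of $\{u_1,\ldots,u_d,v_1,\ldots,v_d\}$ in $\bbC^{m_2}$ (which forces $2d\le m_2$). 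The case $R_2=\bma 0&1\\0&0\ema$ is the transpose-symmetric mirror, producing case (ii) via a row-reducing invertible $W\in\bbM_{m_2}$ applied to $[S_1,S_2]$. The converses in both cases follow by direct substitution into $X$ and $X^{\Gamma_B}$.

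The main obstacle is the bookkeeping linking a single invertible $Q$ (acting on the right of both $S_1$ and $S_2$ simultaneously) to the ranks of the horizontal concatenation $[S_1,S_2]\in\bbM_{m_2,2n_2}$ and the vertical stack $\bma S_1\\S_2\ema\in\bbM_{2m_2,n_2}$: the same $Q$ must realize one rank as exactly $d$ nonzero columns while forcing the other to be the full-rank span of $2d$ linearly independent vectors in $\bbC^{m_2}$. Once this coupling is disentangled, the remaining verification is purely algebraic, and the linear independence of the original $S_1,S_2$ is automatic from $\{u_1,\ldots,u_d\}$ and $\{v_1,\ldots,v_d\}$ not coinciding as ordered tuples.
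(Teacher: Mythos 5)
Your proposal is correct and follows essentially the same route as the paper: reduce $(R_1,R_2)$ to the canonical pencil forms, discard the block-diagonal and symmetric (anti-diagonal) configurations because there $\rank X^{\Gamma_B}=\rank X$, and read off conditions (i) and (ii) from the two nilpotent configurations $R_2=E_{21}$ and $R_2=E_{12}$, which reduce the equality to $\rank[S_1,S_2]=2\rank\bigl[\begin{smallmatrix}S_1\\S_2\end{smallmatrix}\bigr]$ and its transpose mirror. Your explicit note that the rank conditions are preserved under the linear recombination of the $S_i$ accompanying the recombination of the $R_i$ is a small but welcome addition of care over the paper's version.
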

\begin{proof}
Let $M=\sum^2_{j=1}R_j\otimes S_j$ have Schmidt rank two. To saturate inequality \eqref{eq:cj2}, similar to the argument of the first paragraph of proof of Lemma \ref{le:SR=2,Ri=Si=2X2}, we can assume that $R_1=\bma 1 & 0 \\ 0 & 0 \ema$ and $R_2=\bma 0 & b \\ c & 0 \ema$. Up to equivalence we have three cases, namely (i) $b=0,c=1$, (ii) $b=1,c=0$, and (iii) $b=c=1$. Using the matrix transposition operation on both sides of \eqref{eq:cj2}, we see that case (i) and (ii) are equivalent. So we only study (ii) and (iii). 

In case (i), the matrices in the two sides of \eqref{eq:cj2} become
\begin{eqnarray}
\bma
S_1^T \\ S_2^T
\ema,
\quad
\bma
S_1 \\ S_2
\ema.
\end{eqnarray}
The saturation of inequality \eqref{eq:cj2} means
\begin{eqnarray}
\label{S1^T S2^T S_1 S_2}
\rank
\bma
S_1^T \\ S_2^T
\ema
=2\rank \bma
S_1 \\ S_2
\ema.
\end{eqnarray}
This expression matches the form of the earlier Theorem \ref{le:ANYm2,K>2}, and the equation holds if $\rank \left[\begin{matrix}S_1\\S_2\\\end{matrix}\right]=d$ and $\rank\left[\begin{matrix}S_1&S_2\\\end{matrix}\right]=2d$ where $d=1,2,...2m$ and $2m$ is the largest even number not exceeding $m_2$. Respectively, there exists an invertible matrix $Q\in \bbM_{n_2}$ such that 

\bea
\left[\begin{matrix}S_1\\S_2\\\end{matrix}\right]Q=\left[\begin{matrix}\vec{u_1}&\vec{u_2}&\ldots&\vec{u_d}&0&\ldots&0\\\vec{v_1}&\vec{v_2}&\ldots&\vec{v_d}&0&\ldots&0\\\end{matrix}\right], 
\eea

where $\vec{u_1},\vec{u_2},...\vec{u_d},\vec{v_1},\vec{v_2},...\vec{v_d}\in \bbC^{m_2}$ are linearly independent column vectors such that $\left[\begin{matrix}{\vec{u_1}}^T\\{\vec{u_2}}^T\\\vdots\\{\vec{u_d}}^T\\{\vec{v_1}}^T\\\vdots\\{\vec{v_d}}^T\\\end{matrix}\right]$ is a matrix of full rank. These column vectors $\vec{u_1},\vec{u_2},...\vec{u_d},\vec{v_1}\in \bbC^{m_2}$ are precisely the columns of the new matrix obtained after performing column transformations of $Q$ on the original matrix $S_1$ and $\vec{v_1},\vec{v_2},...\vec{v_d}\in \bbC^{m_2}$ are precisely the columns of the new matrix obtained after performing column transformations of $Q$ on the original matrix $S_2$.

On the other hand, one can obtain the equality condition using Theorem \ref{le:ANYm2,K>2}. A similar discussion applies when $S_i$ is a row or column vector due to the switch of $R_i$ and $S_i$, in which we obtain case (ii).

In case (iii), the matrices in the two sides of \eqref{eq:cj2} become
\begin{eqnarray}
\bma
\begin{matrix}{S_1}^T&{S_2}^T\\{S_2}^T&O\\\end{matrix}
\ema,
\quad
\bma
\begin{matrix}{S_1}&{S_2}\\{S_2}&O\\\end{matrix}
\ema.
\end{eqnarray}
The saturation of inequality \eqref{eq:cj2} means
\begin{eqnarray}
\rank
\bma
\begin{matrix}{S_1}^T&{S_2}^T\\{S_2}^T&O\\\end{matrix}
\ema
=2\rank \bma
\begin{matrix}{S_1}&{S_2}\\{S_2}&O\\\end{matrix}
\ema.
\end{eqnarray}
As the matrices are transposed versions of each other, the equation does not hold.
\end{proof}

Next we extend the above lemma to order-four matrices of Schmidt rank three and four. 

\begin{lemma}
\label{le:sr=4}
If the Schmidt rank of order-four matrix $X$ is three, then the inequality in Theorem \ref{cj:1} is strict.   
\end{lemma}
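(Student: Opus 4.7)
My plan is to mirror the case split used in the proof of Lemma \ref{le:SR=3+4,Ri=Si=2X2}(i): a local equivalence on the first tensor factor together with an invertible change of basis in the three-dimensional span of $R_1, R_2, R_3 \in \bbM_2$ (which simultaneously transforms the $S_i$'s by an invertible linear map and so preserves their linear independence) will reduce $X$ to one of exactly two canonical forms
\begin{eqnarray*}
X_1 &=& E_{11}\otimes S_1 + E_{22}\otimes S_2 + (E_{12}+E_{21})\otimes S_3, \\
X_2 &=& E_{11}\otimes S_1 + E_{12}\otimes S_2 + E_{21}\otimes S_3.
\end{eqnarray*}
Since local equivalence preserves both $\rank X$ and $\rank X^{\G_B}$, and leaves $S_1,S_2,S_3$ linearly independent (so $\rank S_i \ge 1$ in each case), it suffices to rule out equality for each canonical form.

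For $X_1$, I would exploit the symmetry of its three coefficient matrices: $R_i^T = R_i$ gives $X_1^{\G_B} = \sum R_i^T \otimes S_i^T = X_1^T$, hence $\rank X_1^{\G_B} = \rank X_1$. Saturating $\rank X_1^{\G_B} = 3\rank X_1$ would then force $\rank X_1 = 0$, which is incompatible with $S_1 \ne 0$.

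For $X_2 = \bma S_1 & S_2 \\ S_3 & 0 \ema$ and $X_2^{\G_B} = \bma S_1^T & S_2^T \\ S_3^T & 0 \ema$, I would set $a = \rank S_1$, $b = \rank S_2$, $c = \rank S_3$ (each $\ge 1$) and compare a block-row upper bound against a block-triangular lower bound. Splitting $X_2^{\G_B}$ by its two block rows gives
$$\rank X_2^{\G_B} \le \rank[S_1^T,\,S_2^T] + \rank[S_3^T,\,0] \le a+b+c;$$
meanwhile the top-left block of $X_2$ gives $\rank X_2 \ge a$, and after a block-row swap of $X_2$ into the lower-triangular block form $\bma S_3 & 0 \\ S_1 & S_2 \ema$, \eqref{baseineq2} yields $\rank X_2 \ge b+c$, so $\rank X_2 \ge \max(a,\, b+c)$. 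Saturation would therefore require $3\max(a,b+c) \le a+b+c$: in the subcase $a \ge b+c$ this collapses to $3a\le 2a$, i.e.\ $a\le 0$; in the subcase $a < b+c$ it collapses to $3(b+c) < 2(b+c)$, i.e.\ $b+c < 0$. Both contradict $a,b,c \ge 1$.

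The main obstacle sits in the $X_2$ case: $X_1$ is essentially free from the symmetry $R_i^T = R_i$, but $X_2$ requires the quantitative interaction between the block-row upper bound on $\rank X_2^{\G_B}$ and the two distinct lower bounds on $\rank X_2$ coming from the top-left block and from the block-triangular form after the row swap.
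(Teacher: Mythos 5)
Your proposal is correct, and its overall architecture coincides with the paper's: the same reduction of the three-dimensional span of the $R_i$ to the two canonical forms $X_1$ (symmetric coefficient matrices) and $X_2=\bma S_1 & S_2\\ S_3 & O\ema$, and the same symmetry argument $X_1^{\G_B}=X_1^T$ forcing $\rank X_1=0$ in the first case. Where you genuinely diverge is the rank counting for $X_2$, and your version is the more solid one. The paper normalizes $S_1$ by SVD and then invokes the lower bounds $\rank X_2\ge \rank S_1+\rank S_2$ and $\rank X_2\ge \rank S_1+\rank S_3$; neither is justified there, and neither holds in general for a block matrix whose zero block sits in the bottom-right corner (e.g. $S_1=I_2$, $S_2=I_2+E_{12}$, $S_3=E_{21}$ gives $\rank X_2=3<4=\rank S_1+\rank S_2$). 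Your bounds $\rank X_2\ge a$ (submatrix) and $\rank X_2\ge b+c$ (block-anti-triangular structure, via \eqref{baseineq2} after a block-row swap) are the ones that actually follow from Lemma \ref{lemma1}, and combined with the upper bound $\rank X_2^{\G_B}\le a+b+c$ they close both subcases cleanly. So you buy a rigorous proof of the $X_2$ case at no extra cost; the paper's SVD normalization buys nothing here that your direct block bounds do not already give. The only cosmetic point is that the lemma is stated for "order-four" $X$ but sits in the subsection with $S_i\in\bbM_{m_2,n_2}$; your argument, like the paper's, works verbatim for rectangular $S_i$, so nothing is lost.
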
 
\begin{proof}
We follow the argument in the first paragraph of proof of Lemma \ref{le:SR=2,Ri=Si=2X2}. We obtain that $X$ is equivalent to one of the following two cases. To make the inequality in Theorem \ref{cj:1} saturated, firstly, we consider $X_1=\bma 1&0\\0&0\ema \otimes S_1+ 
\bma 0&0\\0&1\ema \otimes S_2+
\bma 0& 1\\1&0\ema \otimes S_3$. But $X_1$, being a symmetric matrix, clearly cannot satisfy the equation \eqref{cj:1}. Secondly, we consider $X_2=\bma 1&0\\0&0\ema \otimes S_1+ 
\bma 0&1\\0&0\ema \otimes S_2+
\bma 0& 0\\1&0\ema \otimes S_3=\bma S_1&S_2\\S_3&O\ema\in\mathbb{M}_2\otimes \mathbb{M}_{m_2\times n_2}$. Correspondingly, $X_2^{\G}=\bma S_1&S_3\\S_2&O\ema$. Then we discuss whether the following equation holds and the required conditions.
\begin{align*}
\label{x2}
\rank (X_2^\G)=3\cdot \rank (X_2)
\end{align*}
We assume $\rank S_1=f$, thus there exists an invertible matrix $W$ of order $m_2$ such that $WS_1=\diag (I_f,O)$ where $f\le \min \{m_2,n_2\}$. Correspondingly, suppose $WS_2=S_2',WS_3=S_3'$ in which $\rank S_2=\rank S_2',\rank S_3=\rank S_3'$. Furthermore, we construct an invertible matrix $\diag (W,I_{m_2})$ of order $2m_2$ such that 
\bea
\left[\begin{matrix}W&O\\O&I_{m_2}\\\end{matrix}\right]\left[\begin{matrix}S_1&S_2\\S_3&O\\\end{matrix}\right]=\left[\begin{matrix}WS_1&WS_2\\S_3&O\\\end{matrix}\right]
=\left[\begin{matrix}\left[\begin{matrix}I_f&O\\O&O\\\end{matrix}\right]&S_2'\\S_3&O\\\end{matrix}\right],
\eea
\bea
\left[\begin{matrix}W&O\\O&I_{m_2}\\\end{matrix}\right]\left[\begin{matrix}S_1&S_3\\S_2&O\\\end{matrix}\right]=\left[\begin{matrix}WS_1&WS_3\\S_2&O\\\end{matrix}\right]=\left[\begin{matrix}\left[\begin{matrix}I_f&O\\O&O\\\end{matrix}\right]&S_3'\\S_2&O\\\end{matrix}\right].
\eea
Therefore if equation holds, we have 
\begin{align*}
    f+\rank S_2+\rank S_3
    &\geq \rank (X_2^\G)
    =3\cdot \rank (X_2)\\
    &=2\cdot \rank (X_2)+\rank (X_2)
    \geq2(f+\rank S_2)+f+\rank S_3. 
\end{align*}
So we obtain $0\geq2f+\rank S_2$, which means $f=\rank S_2=0$. Similarly, we obtain $0\geq2f+\rank S_3$, which means $f=\rank S_3=0$. Thus $X_2=O$. This leads to a contradiction.

In conclusion, if the Schmidt rank of order-four matrix $X$ is three, then the inequality in Theorem \ref{cj:1} is strict.  
\end{proof}

Next, we investigate the saturation of inequality in Theorem \ref{cj:1} when $X$ has Schmidt rank four.

\begin{lemma}
\label{le:m1=n1=2,sr=4}
If the matrix $X=\left[\begin{matrix}X_1&X_2\\X_3&X_4\\\end{matrix}\right]\in\bbM_2\otimes\bbM_{m_2,n_2}$ has Schmidt rank four, then the inequality in Theorem \ref{cj:1} is saturated if and only if  $r=\rank X=\rank X_j$ for $j=1,2,3,4$ up to local equivalence. Further $X$ is locally equivalent to
 \begin{eqnarray}
    \label{X_to_4r}
\left[\begin{matrix}\left[\begin{matrix}I_r&O\\O&O\\\end{matrix}\right]&\left[\begin{matrix}O&I_r\\O&O\\\end{matrix}\right]\\\left[\begin{matrix}O&O\\I_r&O\\\end{matrix}\right]&\left[\begin{matrix}O&O\\O&I_r\\\end{matrix}\right]\\\end{matrix}\right].
\end{eqnarray}
Here each block is an order-$2r$ matrix, by removing additional zero columns and rows of the corresponding $X_j$.
\end{lemma}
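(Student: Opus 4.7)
The key observation is that Schmidt rank four means the blocks $\{X_i\}_{i=1}^4$ are linearly independent in $\bbM_{m_2,n_2}$. First I would dispose of the rank condition. Since each $X_i$ is a submatrix of $X$, we have $\rank X_i\le \rank X=r$; on the other hand, subadditivity of matrix rank (writing $X^{\G_B}$ as the sum of four block matrices, one for each $X_i^T$) gives $\rank X^{\G_B}\le \sum_{i=1}^4 \rank X_i^T=\sum_i\rank X_i$. The saturation $\rank X^{\G_B}=4r$ therefore forces $\rank X_i=r$ for every $i$.

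Next I would extract the structural form $X_{ij}=L_iR_j$, using the convention $X_{11}=X_1$, $X_{12}=X_2$, $X_{21}=X_3$, $X_{22}=X_4$. Applying Lemma \ref{le:rankA=rankA1} with pivot $X_{11}$, the condition $\rank X=\rank X_{11}=r$ yields matrices $M,N$ with $X_{12}=X_{11}M$, $X_{21}=NX_{11}$, and hence $X_{22}=NX_{11}M$. Taking a rank-$r$ factorization $X_{11}=LR$ with $L\in\bbM_{m_2,r}$, $R\in\bbM_{r,n_2}$, and setting $L_1=L$, $L_2=NL$, $R_1=R$, $R_2=RM$, I obtain the clean form $X_{ij}=L_iR_j$ with all four factors of rank $r$.

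The heart of the argument is translating the hypothesis $\rank X^{\G_B}=4r$ through this factorization. Writing $U_i=L_i^T$ and $V_j=R_j^T$, one checks block-by-block the identity
\bea
\label{plan-factor}
X^{\G_B}=\bigl(I_2\otimes[V_1\ V_2]\bigr)\begin{bmatrix}I_2\otimes U_1\\I_2\otimes U_2\end{bmatrix},
\eea
whose left factor is $2n_2\times 4r$ of rank $2\,\rank[V_1\ V_2]$ and right factor is $4r\times 2m_2$ of rank $2\,\rank\begin{bmatrix}U_1\\U_2\end{bmatrix}$. Hence saturation forces both $\rank[L_1\ L_2]=2r$ and $\rank\begin{bmatrix}R_1\\R_2\end{bmatrix}=2r$, which requires $m_2,n_2\ge 2r$. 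I expect this to be the main obstacle: finding the correct factorization \eqref{plan-factor}, and then observing that full column-rank of the left factor combined with full row-rank of the right factor makes their product attain rank exactly $4r$.

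Finally, the canonical form is produced by a single sweep of local equivalence on the $B$-side alone. Using $P_2\in\GL_{m_2}$ to row-reduce the full-column-rank matrix $[L_1\ L_2]$ to $\begin{bmatrix}I_{2r}\\0\end{bmatrix}$, and independently $Q_2\in\GL_{n_2}$ to column-reduce $\begin{bmatrix}R_1\\R_2\end{bmatrix}$ to $[I_{2r}\ 0]$, one obtains $L_1=\begin{bmatrix}I_r\\0\\0\end{bmatrix}$, $L_2=\begin{bmatrix}0\\I_r\\0\end{bmatrix}$, $R_1=[I_r\ 0\ 0]$, $R_2=[0\ I_r\ 0]$. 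Multiplying out $X_{ij}=L_iR_j$ with these standard forms and dropping the trailing zero rows and columns recovers precisely the stated canonical block matrix. The converse is a direct computation on the canonical form: rewriting it as $\sum_{ij}e_{ij}\otimes e_{ij}\otimes I_r$ yields $\rank X=r$, while its partial transpose $\sum_{ij}e_{ij}\otimes e_{ji}\otimes I_r=\mathrm{SWAP}\otimes I_r$ has rank $4r$, so saturation holds.
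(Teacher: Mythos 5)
Your proof is correct, and it takes a genuinely different route from the paper's. Both arguments open identically, with the subadditivity chain $\rank X^{\G_B}\le\sum_i\rank X_i\le 4\rank X$ forcing $\rank X_i=r$ for all $i$. But from there the paper derives range conditions from Lemmas \ref{le:rankA=rankA1} and \ref{le:rankA=rankB+rankC}, applies an SVD to $X_1$, and then performs an iterative block-by-block reduction, repeatedly invoking the saturated rank equalities to argue that sub-blocks such as $X_{22}$ and $X_{33}$ must themselves have rank $r$ and can be normalized to $I_r$. You instead exploit $\rank X=\rank X_{11}$ to get the global factorization $X_{ij}=L_iR_j$ and then encode the entire saturation hypothesis in the single identity $X^{\G_B}=\bigl(I_2\otimes[V_1\ V_2]\bigr)\bigl[\begin{smallmatrix}I_2\otimes U_1\\ I_2\otimes U_2\end{smallmatrix}\bigr]$, which I have checked block-by-block ($(i,j)$ block equals $V_jU_i=X_{ij}^T$, consistent with the paper's convention for $\G_B$). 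This converts saturation into exactly two conditions, $\rank[L_1\ L_2]=2r$ and $\rank\bigl[\begin{smallmatrix}R_1\\ R_2\end{smallmatrix}\bigr]=2r$, after which the canonical form drops out of one local equivalence on the $B$-side; your converse via $\mathrm{SWAP}\otimes I_r$ is likewise a cleaner computation than the paper's. What your approach buys is an intrinsic equivalent characterization of saturation, far lighter bookkeeping, and a formulation that extends essentially verbatim to the $k\times k$ and $m_1\times n_1$ generalizations (Lemmas \ref{k^2Schmidt_rank} and \ref{m1n1Schmidt_rank}), where the paper resorts to induction plus another round of block reductions; what the paper's approach buys is that it stays entirely within elementary row/column operations without needing the rank-factorization lemma $X_{12}=X_{11}M$, $X_{21}=NX_{11}$, $X_{22}=NX_{11}M$ (which, to be safe, you should justify explicitly from Lemma \ref{le:rankA=rankA1}: the range conditions give $M$ and $N$, and then conjugating by $\bigl[\begin{smallmatrix}I&0\\-N&I\end{smallmatrix}\bigr]$ and $\bigl[\begin{smallmatrix}I&-M\\0&I\end{smallmatrix}\bigr]$ forces $X_{22}=NX_{11}M$).
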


\begin{proof}
It is equivalent to $\rank X^\G=4\rank X$ where $X=\left[\begin{matrix}X_1&X_2\\X_3&X_4\\\end{matrix}\right],X^\G=\left[\begin{matrix}X_1&X_3\\X_2&X_4\\\end{matrix}\right]$. 
Using the known facts on matrix rank, we have
\begin{eqnarray}
\rank X^\G \le 
\rank [X_1, X_3]
+
\rank [X_2, X_4]
\le
\sum^4_{j=1} \rank X_j
\le 
\sum^4_{j=1} \rank X
=4 \rank X.
\notag\\
\end{eqnarray}
Because the equality in \eqref{eq:cj2} holds, the above inequalities are all saturated. We have
\begin{eqnarray}
\label{eq:rankXj=rankX}
&&
{1\over4}\rank X^\G=\rank X_j=\rank X, \quad j=1,2,3,4,  
\\=&& 
{1\over2}\rank[X_1,X_3]
=
{1\over2}\rank[X_2,X_4]
\\=&& 
{1\over2}\rank\bma X_1 \\ X_2\ema
= 
{1\over2}\rank\bma X_3 \\ X_4\ema.
\end{eqnarray}
Suppose $\rank X_1=r$. The above equations imply that 
\begin{eqnarray}
    \label{eq:RX_decomposition1}
    R(X) &=& R\begin{bmatrix}X_1\\X_3\end{bmatrix}
    = R\begin{bmatrix}X_2\\X_4\end{bmatrix}, \\    \label{eq:RX_decomposition2}
    R(X_1^T) &=& R(X_3^T), \quad
    R(X_2^T) = R(X_4^T),
\end{eqnarray}
The implication is due to Lemma \ref{le:rankA=rankA1}. We see that $\rank X_1=\rank X$ holds if and only if $R(X)=R\begin{bmatrix}X_1\\X_3\\\end{bmatrix}\ and\ R(X_3^T)\subseteq R(X_1^T)$. Similarly, $\rank X_3=\rank X$ holds if and only if $R(X)=R\begin{bmatrix}X_1\\X_3\\\end{bmatrix}\ and\ R(X_1^T)\subseteq R(X_3^T)$. For $\rank X_2=\rank X,\rank X_4=\rank X$, the equivalent conditions exist. So $\rank X_j=\rank X, \quad j=1,2,3,4$ holds if and only if $R(X)=R\begin{bmatrix}X_1\\X_3\\\end{bmatrix}=R\begin{bmatrix}X_2\\X_4\\\end{bmatrix}$ and $R(X_1^T)=R(X_3^T),R(X_2^T)=R(X_4^T)$.

Using Lemma \ref{le:rankA=rankB+rankC}, we can also obtain the equation as follows.
\begin{eqnarray}
R(X^\G) =
R\bma X_1 \\ X_2 \ema+R\bma X_3 \\ X_4 \ema.
\end{eqnarray}
Moreover, the dimension of the vector space satisfies the following conditions.
\begin{eqnarray}
\label{dim4r&2r}
\dim R(X^\G) = 4r,
\\ \dim R\bma X_1 \\ X_2 \ema = R\bma X_3 \\ X_4 \ema=2r. 
\end{eqnarray}

Using SVD on $X_1$ and Lemma \ref{le:rankA=rankA1}, we may assume that
\begin{eqnarray}
    \label{X_block1}
    &\rank X=\rank \left[\begin{matrix}X_1&X_2\\X_3&X_4\\\end{matrix}\right]=\rank\left[\begin{matrix}\left[\begin{matrix}I_r&O\\O&O\\\end{matrix}\right]&\left[\begin{matrix}X_{21}&X_{22}\\O&O\\\end{matrix}\right]\\\left[\begin{matrix}X_{31}&O\\X_{33}&O\\\end{matrix}\right]&\left[\begin{matrix}X_{41}&X_{42}\\X_{43}&X_{44}\\\end{matrix}\right]\\\end{matrix}\right],\\
    \label{X_block2}
    &\rank X^\G=\rank \left[\begin{matrix}X_1&X_3\\X_2&X_4\\\end{matrix}\right]=\rank\left[\begin{matrix}\left[\begin{matrix}I_r&O\\O&O\\\end{matrix}\right]&\left[\begin{matrix}X_{31}&O\\X_{33}&O\\\end{matrix}\right]\\\left[\begin{matrix}X_{21}&X_{22}\\O&O\\\end{matrix}\right]&\left[\begin{matrix}X_{41}&X_{42}\\X_{43}&X_{44}\\\end{matrix}\right]\\\end{matrix}\right].
\end{eqnarray}
Next, we reduce the size of matrix $X$. In $\rank\left[\begin{matrix}\left[\begin{matrix}I_r&O\\O&O\\\end{matrix}\right]&\left[\begin{matrix}X_{21}&X_{22}\\O&O\\\end{matrix}\right]\\\left[\begin{matrix}X_{31}&O\\X_{33}&O\\\end{matrix}\right]&\left[\begin{matrix}X_{41}&X_{42}\\X_{43}&X_{44}\\\end{matrix}\right]\\\end{matrix}\right]$, since columns $n_2-r$ to $n_2$ are all zero, the Kronecker product’s properties allow local operations on $X_2$ and $X_4$, such that $\left[\begin{matrix}\left[\begin{matrix}X_{21}&X_{22}\\O&O\\\end{matrix}\right]\\\left[\begin{matrix}X_{41}&X_{42}\\X_{43}&X_{44}\\\end{matrix}\right]\\\end{matrix}\right]$ can be reduced to $\left[\begin{matrix}\left[\begin{matrix}X_{21}&X_{22}'&O\\O&O&O\\\end{matrix}\right]\\\left[\begin{matrix}X_{41}&X_{42}'&O\\X_{43}&X_{44}'&O\\\end{matrix}\right]\\\end{matrix}\right]$ where $X_{21},X_{22}$ are both order-$r$ matrix and $X_{42}',X_{44}'$ both have $r$ columns. Similarly, we can perform operations on rows $2m_2-r$ through $2m_2$. Therefore, there exist order-$m_2$ and order-$n_2$ invertible matrices $\left[\begin{matrix}I_r&O\\O&B_1\\\end{matrix}\right],\left[\begin{matrix}I_r&O\\O&B_2\\\end{matrix}\right]$ such that
\begin{align*}
&(I_2\otimes \left[\begin{matrix}I_r&O\\O&B_1\\\end{matrix}\right])\bigg( \sum^4_{i=1} R_i \ox S_i \bigg)(I_2\otimes \left[\begin{matrix}I_r&O\\O&B_2\\\end{matrix}\right])\\
= &\sum^4_{i=1} (I_2R_iI_2) \ox \left[\begin{matrix}I_r&O\\O&B_1\\\end{matrix}\right]S_i\left[\begin{matrix}I_r&O\\O&B_2\\\end{matrix}\right] 
=\left[\begin{matrix}\left[\begin{matrix}I_r&O&O\\O&O&O\\O&O&O\\\end{matrix}\right]&\left[\begin{matrix}X_{21}&X_{22}''&O\\O&O&O\\O&O&O\\\end{matrix}\right]\\\left[\begin{matrix}X_{31}&O&O\\X_{33}''&O&O\\O&O&O\\\end{matrix}\right]&\left[\begin{matrix}X_{41}&X_{42}''&O\\X_{43}''&X_{44}''&O\\O&O&O\\\end{matrix}\right]\\\end{matrix}\right]. 
\end{align*}
Thus, we have 
\begin{eqnarray}
\label{Xtosquare}
\rank X=\rank \left[\begin{matrix}\left[\begin{matrix}I_r&O\\O&O\\\end{matrix}\right]&\left[\begin{matrix}X_{21}&X_{22}\\O&O\\\end{matrix}\right]\\\left[\begin{matrix}X_{31}&O\\X_{33}&O\\\end{matrix}\right]&\left[\begin{matrix}X_{41}&X_{42}\\X_{43}&X_{44}\\\end{matrix}\right]\\\end{matrix}\right], 
\end{eqnarray}
where $X_{ij}$ is order-$r$ matrix and $X_{22}''$ has already been expressed in the form of $X_{22}$, similar for $X_{33}'',X_{42}'',X_{43}'',X_{44}''$. 

Using \eqref{dim4r&2r}, we obtain $\dim R\left[\begin{matrix}\left[\begin{matrix}I_r&O\\O&O\\\end{matrix}\right]\\\left[\begin{matrix}X_{21}&X_{22}\\O&O\\\end{matrix}\right]\\\end{matrix}\right]=2r$, which means $\rank X_{22}=r$. So $\rank X=\rank \left[\begin{matrix}\left[\begin{matrix}I_r&O\\O&O\\\end{matrix}\right]&\left[\begin{matrix}O&X_{22}\\O&O\\\end{matrix}\right]\\\left[\begin{matrix}X_{31}&O\\X_{33}&O\\\end{matrix}\right]&\left[\begin{matrix}X_{41}&X_{42}\\X_{43}&X_{44}\\\end{matrix}\right]\\\end{matrix}\right]=\rank \left[\begin{matrix}\left[\begin{matrix}I_r&O\\O&O\\\end{matrix}\right]&\left[\begin{matrix}O&I_r\\O&O\\\end{matrix}\right]\\\left[\begin{matrix}X_{31}&O\\X_{33}&O\\\end{matrix}\right]&\left[\begin{matrix}O&X_{31}\\O&X_{33}\\\end{matrix}\right]\\\end{matrix}\right]$ where we have already used $\rank X=r$. Next, using \eqref{eq:rankXj=rankX}, $\rank \left[\begin{matrix}\left[\begin{matrix}I_r&O\\O&O\\\end{matrix}\right]&\left[\begin{matrix}X_{31}&O\\X_{33}&O\\\end{matrix}\right]\\\end{matrix}\right]=2r$ which means $\rank X_{33}=r$. Therefore, matrix $X$ can be further reduced to the form as follows. 
\begin{eqnarray}
    \label{X_to_4r}
    \rank X=\rank\left[\begin{matrix}\left[\begin{matrix}I_r&O\\O&O\\\end{matrix}\right]&\left[\begin{matrix}O&I_r\\O&O\\\end{matrix}\right]\\\left[\begin{matrix}O&O\\I_r&O\\\end{matrix}\right]&\left[\begin{matrix}O&O\\O&I_r\\\end{matrix}\right]\\\end{matrix}\right].
\end{eqnarray}
\end{proof}

\subsection{A more general case on the equality condition}

\begin{lemma}
\label{k^2Schmidt_rank}
Suppose the matrix $X=\left[\begin{matrix}X_{11}&X_{12}&\ldots&X_{1k}\\X_{21}&X_{22}&\ldots&X_{2k}\\\vdots&\vdots&\vdots&\vdots\\X_{k1}&X_{k2}&\ldots&X_{kk}\\\end{matrix}\right]
\in\bbM_k\otimes\bbM_{m_2,n_2}$ has Schmidt rank $k^2$. Then the inequality in Theorem \ref{cj:1} is saturated if and only if $r=\rank X=\rank X_{ij}$ for $i,j=1,2,...k$ up to local equivalence. Further $X$ is locally equivalent to
\begin{eqnarray}
\label{XEij} \left[\begin{matrix}E_{11}&E_{12}&\ldots&E_{1k}\\E_{21}&E_{22}&\ldots&E_{2k}\\\vdots&\vdots&\vdots&\vdots\\E_{k1}&E_{k2}&\ldots&E_{kk}\\\end{matrix}\right].
\end{eqnarray}
Here each $E_{ij}$ is an $kr\times kr$ block matrix partitioned into $k^2$ blocks where the block at the $(i, j)$ position is the order-$r$ identity matrix $I_r$, and all other blocks are zero, by removing additional zero columns and rows of the corresponding $X_{ij}$.
\end{lemma}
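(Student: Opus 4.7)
The plan is to combine a rank-chain argument with the rank-$r$ factorization of $X$, generalizing Lemma~\ref{le:m1=n1=2,sr=4}. First, from Lemma~\ref{cj:1} I would chain
\[
\rank X^{\G_A}\;\le\;\sum_{a=1}^{k}\rank[X_{1a},X_{2a},\ldots,X_{ka}]\;\le\;\sum_{a,b=1}^{k}\rank X_{ba}\;\le\;k^2\rank X,
\]
and invoke the saturation hypothesis with $r=\rank X$ to force equality at every step. The last step yields $\rank X_{ij}=r$ for every $(i,j)$, and the middle step, by Lemma~\ref{le:rankA=rankB+rankC}, forces the column spaces of $X_{1a},\ldots,X_{ka}$ to be in direct sum for each $a$. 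Combining these with a block-row analysis of $X$ itself (each block-row has rank at most $r$ but contains a rank-$r$ block, hence has rank exactly $r$, so all its blocks share a column space) shows that the column space $V_i\subseteq\bbC^{m_2}$ of $X_{ij}$ depends only on $i$, and that $V_1\oplus\cdots\oplus V_k\subseteq\bbC^{m_2}$. A parallel chain obtained by decomposing $X^{\G_A}$ into block-columns yields a common row space $W_j\subseteq\bbC^{n_2}$ for column $j$ of $X$ together with $W_1\oplus\cdots\oplus W_k\subseteq\bbC^{n_2}$, so in particular $m_2,n_2\ge kr$.

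Next, since $\rank X=r$, one can write $X=PL^T$ with $P\in\bbC^{km_2\times r}$ and $L\in\bbC^{kn_2\times r}$. Partitioning block-wise as $P^{(1)},\ldots,P^{(k)}\in\bbM_{m_2,r}$ and $L^{(1)},\ldots,L^{(k)}\in\bbM_{n_2,r}$ gives $X_{ij}=P^{(i)}(L^{(j)})^T$. Since $\rank X_{ij}=r$, each $P^{(i)}$ and $L^{(j)}$ has full column rank $r$, with $\lin(\text{cols }P^{(i)})=V_i$ and $\lin(\text{cols }L^{(j)})=W_j$. The direct-sum structure of the $V_i$'s now allows a single invertible $P_2\in\bbM_{m_2}$ to carry each $V_i$ onto the coordinate subspace $\lin\{e_{(i-1)r+1},\ldots,e_{ir}\}\subseteq\bbC^{kr}$, after which $P^{(i)}=e_i\otimes A_i$ for some invertible $A_i\in\bbM_r$; the $m_2-kr$ extra rows become zero and can be discarded as the lemma permits. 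An identical step with $Q_2\in\bbM_{n_2}$ yields $L^{(j)}=e_j\otimes B_j$ with $B_j\in\bbM_r$ invertible, so $X_{ij}=(e_ie_j^T)\otimes(A_iB_j^T)$: that is, $X_{ij}$ is supported in the inner $(i,j)$-block with value $A_iB_j^T$.

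Finally, a block-diagonal inner transformation with diagonal blocks $A_1^{-1},\ldots,A_k^{-1}$ on the left and $B_1^{-T},\ldots,B_k^{-T}$ on the right sends the inner $(i,j)$-block to $A_i^{-1}(A_iB_j^T)B_j^{-T}=I_r$ uniformly in $(i,j)$, producing the form \eqref{XEij}. The converse direction and the Schmidt-rank-$k^2$ claim are immediate from \eqref{XEij}, since the matrices $E_{ij}$ are linearly independent and direct computation gives $\rank X=r$ and $\rank X^{\G_A}=k^2r$. The main obstacle I foresee is the careful bookkeeping of the direct-sum conditions in the first paragraph: the tight first step of the rank chain forces the row spaces of the block-rows of $X^{\G_A}$ to be in direct sum inside $\bbC^{kn_2}$, and one must verify that this condition, once parameterized via the $W_j$'s, really does unwind to $W_1\oplus\cdots\oplus W_k$ being a direct sum inside $\bbC^{n_2}$ rather than merely a compatible condition in the ambient product space.
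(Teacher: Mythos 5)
Your argument is correct, but it follows a genuinely different route from the paper. The paper proves this lemma by induction on $k$: the base case $k=2$ is Lemma \ref{le:m1=n1=2,sr=4}, and the induction step reduces the upper-left $(k-1)\times(k-1)$ block array to the $E_{ij}$ form and then uses rank chains to normalize the last block-row, block-column, and corner block $W$. You instead give a direct, non-inductive argument: the two rank chains (one over block-rows of $X^{\G}$, one over block-columns) force $\rank X_{ij}=r$, a common column space $V_i$ per block-row and row space $W_j$ per block-column, and the direct-sum conditions $V_1\oplus\cdots\oplus V_k\subseteq\bbC^{m_2}$, $W_1\oplus\cdots\oplus W_k\subseteq\bbC^{n_2}$; the rank factorization $X=PL^T$ then converts all of this into $X_{ij}=(e_ie_j^T)\otimes(A_iB_j^T)$ after one change of basis on each inner factor, and a block-diagonal inner transformation finishes. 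What your approach buys is that it sidesteps the delicate point in the paper's induction step (one must justify that the $(k-1)\times(k-1)$ sub-array inherits the saturation hypothesis, which the paper passes over with ``similar to Theorem \ref{le:m1=n1=2,sr=4}''), and it extends verbatim to the rectangular case of Lemma \ref{m1n1Schmidt_rank}; what the paper's route buys is direct reuse of the already-proved $k=2$ case. On the obstacle you flag at the end: it is not actually an issue, because the direct-sum data you need comes from tightness of the \emph{middle} inequality in each chain ($\rank$ of a concatenation equals the sum of the ranks of its blocks, via Lemma \ref{le:rankA=rankB+rankC}), and those column/row spaces already live in $\bbC^{m_2}$ and $\bbC^{n_2}$ respectively; combined with the observation that each block-row (resp.\ block-column) of $X$ has rank exactly $r$, so that $R(X_{ba})=V_b$ and $R(X_{ba}^T)=W_a$ independently of the other index, the condition unwinds exactly to $W_1\oplus\cdots\oplus W_k$ inside $\bbC^{n_2}$. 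The tightness of the \emph{first} inequality in each chain is an additional consequence of saturation that your construction never needs.
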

\begin{proof}
The 'if' part can be verified straightforwardly. We prove the 'only if' part. We adopt the following notation. Suppose $E_{ij}$ is an $kr\times kr$ block matrix partitioned into $k^2$ blocks where the block at the $(i, j)$ position is the order-$r$ identity matrix $I_r$. For example, $E_{11}=\left[\begin{matrix}I_r&O&\ldots&O\\O&O&\ldots&O\\\vdots&\vdots&\vdots&\vdots\\O&O&\ldots&O\\\end{matrix}\right]$. Next, we apply mathematical induction on $k$, the order of the square matrix $R_i$.

(i) When $k=2$, the statement holds by Theorem \ref{le:m1=n1=2,sr=4}. 

(ii) Let $k>2$. Suppose the statement is true for $R_i$ of size $(k-1) \times (k-1)$. Therefore similar to Theorem \ref{le:m1=n1=2,sr=4}, through elementary operations $X$ can be reduced to the form $\left[\begin{matrix}E_{11}'&\ldots&E_{1,k-1}'&A_1'\\\vdots&\ldots&\vdots&\vdots\\E_{k-1,1}'&\ldots&E_{k-1,k-1}'&A_{k-1}'\\B_1'&\ldots&B_{k-1}'&W'\\\end{matrix}\right]$ where $A_i', B_i', i=1,2,...k-1$ and $W'$ are all $m_2\times n_2$ matrices. Here we adopt the notation $E_{ij}'$ because it has not yet been reduced to a $kr$-order square matrix. Since the up-left $(k-1)^2$ block sub-matrices of the above matrix $\left[\begin{matrix}E_{11}'&\ldots&E_{1,k-1}'&A_1'\\\vdots&\ldots&\vdots&\vdots\\E_{k-1,1}'&\ldots&E_{k-1,k-1}'&A_{k-1}'\\B_1'&\ldots&B_{k-1}'&W'\\\end{matrix}\right]$ have their last $(m_2-kr)$ rows and $(n_2-kr)$ columns as zero matrices, similar to the previous form \eqref{Xtosquare}, we may regard the entire large matrix as being composed of $k^2$ square sub-matrices of order-$kr$. Then we obtain 
\begin{eqnarray}
\rank X=\rank \left[\begin{matrix}E_{11}&\ldots&E_{1,k-1}&A_1\\\vdots&\ldots&\vdots&\vdots\\E_{k-1,1}&\ldots&E_{k-1,k-1}&A_{k-1}\\B_1&\ldots&B_{k-1}&W\\\end{matrix}\right], 
\end{eqnarray}
where each $A_i,i=1,2,...k$ is an order-$kr$ matrix, whose $i$-th row is $\left[\begin{matrix}A_{i1}&A_{i2}&\ldots&A_{ik}\\\end{matrix}\right]$ and the other block matrices in it are order-$r$ zero matrices. Also, each $B_i,i=1,2,...k$ is an order-$kr$ matrix, whose $i$-th column is $\left[\begin{matrix}B_{1i}\\B_{2i}\\\vdots\\B_{ki}\\\end{matrix}\right]$ and the other block matrices in it are order-$r$ zero matrices. Then we consider $\rank X^\G$. We have that $\rank X^\G=k^2\rank X$ is equivalent to the case where all inequalities below become equalities. 
\begin{eqnarray}
\label{k_=}
&\rank X^\G=\rank \left[\begin{matrix}E_{11}&\ldots&E_{k-1,1}&B_1\\\vdots&\ldots&\vdots&\vdots\\E_{1,k-1}&\ldots&E_{k-1,k-1}&B_{k-1}\\A_1&\ldots&A_{k-1}&W\\\end{matrix}\right] \\
&\le \sum_{j=1}^{k-1}\rank \left[\begin{matrix}E_{1j}&\ldots&E_{k-1,j}&B_j\\\end{matrix}\right]+\rank \left[\begin{matrix}A_1&\ldots&A_{k-1}&W\\\end{matrix}\right]\\
&\le \sum_{j=1}^{k^2}\rank X_j\le \sum_{j=1}^{k^2}\rank X=k^2\rank X. 
\end{eqnarray}
Thus we know that $\rank \left[\begin{matrix}E_{1j}&\ldots&E_{k-1,j}&B_j\\\end{matrix}\right]=kr,j=1,2,...k-1$. When $j=1$, we have 
\begin{eqnarray}
\label{row_j=1}
\rank \left[\begin{matrix}\left[\begin{matrix}I_r&O&\ldots&O\\\vdots&\vdots&\ldots&O\\O&O&\ldots&\vdots\\O&O&\ldots&O\\\end{matrix}\right]&\ldots&\left[\begin{matrix}O&O&\ldots&O\\\vdots&\vdots&\ldots&O\\I_r&O&\ldots&\vdots\\O&O&\ldots&O\\\end{matrix}\right]&\left[\begin{matrix}B_{11}&O&\ldots&O\\\vdots&\vdots&\ldots&O\\B_{k-1,1}&O&\ldots&\vdots\\B_{k1}&O&\ldots&O\\\end{matrix}\right]\\\end{matrix}\right]=kr. 
\end{eqnarray}
So $\left[\begin{matrix}B_{11}&O&\ldots&O\\\vdots&\vdots&\ldots&O\\B_{k-1,1}&O&\ldots&\vdots\\B_{k1}&O&\ldots&O\\\end{matrix}\right]$ with rank-$r$ in \eqref{row_j=1} can be reduced to the form of  $E_{k1}$. Moreover, due to the presence of the zero matrices, the other matrices in \eqref{row_j=1} retain their original forms. Similarly, $B_2,...B_{k-1}$ with rank-$r$ can be reduced to the form of $E_{k2},...E_{k,k-1}$. Besides, the column space of $\left[\begin{matrix}E_{11}&\ldots&E_{k-1,1}&B_1\\\vdots&\ldots&\vdots&\vdots\\E_{1,k-1}&\ldots&E_{k-1,k-1}&B_{k-1}\\A_1&\ldots&A_{k-1}&W\\\end{matrix}\right]$ also satisfies conditions similar to those under which the above inequality \eqref{k_=} becomes an equality, which means $A_1,A_2,... A_{k-1}$ can be reduced to the form of $E_{1k},E_{2k},...E_{k-1,k}$. In the above process, matrix $w$ also undergoes corresponding elementary transformations and becomes $W''$, which do not alter its rank. 

Now let us revisit $\rank X$ that satisfies
\begin{eqnarray}
\rank X=\rank \left[\begin{matrix}E_{11}&\ldots&E_{1,k-1}&E_{1k}\\\vdots&\ddots&\vdots&\vdots\\E_{k-1,1}&\ldots&E_{k-1,k-1}&E_{k-1,k}\\E_{k1}&\ldots&E_{k,k-1}&W''\\\end{matrix}\right]\\
=\rank \left[\begin{matrix}\left[\begin{matrix}I_r&\ldots&O&O\\O&\ldots&O&O\\\vdots&\vdots&\vdots&\vdots\\O&\ldots&O&O\\\end{matrix}\right]&\ldots&\left[\begin{matrix}O&\ldots&I_r&O\\O&\ldots&O&O\\\vdots&\vdots&\vdots&\vdots\\O&\ldots&O&O\\\end{matrix}\right]&\left[\begin{matrix}O&\ldots&O&I_r\\O&\ldots&O&O\\\vdots&\vdots&\vdots&\vdots\\O&\ldots&O&O\\\end{matrix}\right]\\\left[\begin{matrix}O&\ldots&O&O\\I_r&\ldots&O&O\\\vdots&\vdots&\vdots&\vdots\\O&\ldots&O&O\\\end{matrix}\right]&\ldots&\left[\begin{matrix}O&\ldots&O&O\\O&\ldots&I_r&O\\\vdots&\vdots&\vdots&\vdots\\O&\ldots&O&O\\\end{matrix}\right]&\left[\begin{matrix}O&\ldots&O&O\\O&\ldots&O&I_r\\\vdots&\vdots&\vdots&\vdots\\O&\ldots&O&O\\\end{matrix}\right]\\\vdots&\ddots&\vdots&\vdots\\\left[\begin{matrix}O&\ldots&O&O\\O&\ldots&O&O\\\vdots&\vdots&\vdots&\vdots\\I_r&\ldots&O&O\\\end{matrix}\right]&\ldots&\left[\begin{matrix}O&\ldots&O&O\\O&\ldots&O&O\\\vdots&\vdots&\vdots&\vdots\\O&\ldots&I_r&O\\\end{matrix}\right]&W''\\\end{matrix}\right].
\end{eqnarray}
Additionally, we know $\rank X=r$, which means now order-$kr$ $W''$ has already become $E_{kk}=\left[\begin{matrix}O&\ldots&O&O\\O&\ldots&O&O\\\vdots&\vdots&\vdots&\vdots\\O&\ldots&O&I_r\\\end{matrix}\right]$. Therefore we obtain that when $X=\sum^{k^2}_{i=1} R_i \ox S_i$ has Schmidt rank $k^2$ where $R_i$ is  order-$k$and $S_i$ is a $m_2\times n_2$ matrix, the inequality in Theorem \ref{cj:1} is saturated if and only if $X$ has rank $r$ and $X$ is locally equivalent the simplest form as follows. 
\begin{eqnarray}
&\rank X^\G
=\rank \left[\begin{matrix}E_{11}&E_{21}&\ldots&E_{k1}\\E_{12}&E_{22}&\ldots&E_{k2}\\\vdots&\vdots&\vdots&\vdots\\E_{1k}&E_{2k}&\ldots&E_{kk}\\\end{matrix}\right]=k^2r\\
&=k^2\rank \left[\begin{matrix}E_{11}&E_{12}&\ldots&E_{1k}\\E_{21}&E_{22}&\ldots&E_{2k}\\\vdots&\vdots&\vdots&\vdots\\E_{k1}&E_{k2}&\ldots&E_{kk}\\\end{matrix}\right]=k^2\rank X.
\end{eqnarray}
\end{proof}

\begin{lemma}
\label{m1n1Schmidt_rank}
Suppose the matrix $X=\left[\begin{matrix}X_{11}&X_{12}&\ldots&X_{1n_1}\\X_{21}&X_{22}&\ldots&X_{2n_1}\\\vdots&\vdots&\vdots&\vdots\\X_{m_11}&X_{m_12}&\ldots&X_{m_1 n_1}\\\end{matrix}\right]\in\bbM_{m_1,n_1}\otimes\bbM_{m_2,n_2}$ has Schmidt rank $m_1 n_1\le m_2 n_2$. Then the inequality in Theorem \ref{cj:1} is saturated if and only if $r=\rank X=\rank X_{ij}$ for $i=1,2,...,m_1;j=1,2,... ,n_1$ up to local equivalence. Further $X$ is locally equivalent to
\begin{eqnarray}
\label{XEij} \left[\begin{matrix}E_{11}&E_{12}&\ldots&E_{1n_1}\\E_{21}&E_{22}&\ldots&E_{2n_1}\\\vdots&\vdots&\vdots&\vdots\\E_{m_11}&E_{m_12}&\ldots&E_{m_1 n_1}\\\end{matrix}\right].
\end{eqnarray}
Here each $E_{ij}$ is an $kr\times kr$ block matrix partitioned into $k^2$ blocks where $k=\max \{m_1,n_1\}$  , the block at the $(i, j)$ position is the order-$r$ identity matrix $I_r$, and all other blocks are zero, by removing additional zero columns and rows of the corresponding $X_{ij}$.
\end{lemma}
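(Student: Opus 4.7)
The plan is to mirror the strategy of Lemma \ref{k^2Schmidt_rank}, generalising every step from a square $k\times k$ block structure to a rectangular $m_1\times n_1$ one. First, since the matrices $R_1,\dots,R_{m_1 n_1}$ are linearly independent in the $m_1 n_1$-dimensional space $\bbM_{m_1,n_1}$, a local equivalence on system $A$ sends them to the standard basis $\{\ketbra{a}{b}:1\le a\le m_1,\ 1\le b\le n_1\}$ of $\bbM_{m_1,n_1}$. After this normalisation, writing $X=\sum_{a,b}\ketbra{a}{b}\otimes S_{ab}$, the $(a,b)$-block of $X$ equals $S_{ab}$ and the $(b,a)$-block of $X^{\G_A}$ equals $S_{ab}$.

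Next I would extract the block-rank identities. Set $r=\rank X$. Since each $S_{ab}$ is a submatrix of $X$ we have $\rank S_{ab}\le r$, and by Lemma \ref{lemma1} applied to the block structure of $X^{\G_A}$ we get $\rank X^{\G_A}\le\sum_{a,b}\rank S_{ab}\le m_1 n_1\cdot r$. The assumed saturation $\rank X^{\G_A}=m_1 n_1\cdot r$ then forces every intermediate inequality to be an equality, so $\rank S_{ab}=r$ for all $(a,b)$. Lemmas \ref{le:rankA=rankA1} and \ref{le:rankA=rankB+rankC} convert these into concrete range coincidences along each block row and column of $X$, exactly as in the proof of Lemma \ref{le:m1=n1=2,sr=4}.

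I would then reduce by SVD. Local operations on system $B$ bring $S_{11}$ to $\bma I_r&0\\0&0\ema$, and the same operations act uniformly on every $S_{ab}$. The range identities obtained above pin down the support of each remaining $S_{ab}$: its non-zero content must occupy a single $r\times r$ patch inside an ambient $kr\times kr$ region with $k=\max\{m_1,n_1\}$, and that patch must sit in sub-position $(a,b)$. Iterating this reduction block by block along the inductive scheme of Lemma \ref{k^2Schmidt_rank} forces every $S_{ab}$ into $E_{ab}$, and dropping the redundant zero rows and columns gives the form \eqref{XEij}. The ``if'' direction is then a direct computation confirming $\rank X=r$ and $\rank X^{\G_A}=m_1 n_1\cdot r$ on the canonical form.

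The main obstacle will be the iterative reduction. In Lemma \ref{k^2Schmidt_rank} the induction peels off one block row and one block column simultaneously, which is natural when the block matrix is square; here $m_1\ne n_1$ in general, so the induction must be organised along the longer side $k=\max\{m_1,n_1\}$ while carrying the shorter side as a parameter. A cleaner alternative I would try first is to pad $X$ by $|n_1-m_1|$ zero block rows (or columns) to obtain $\tilde X\in\bbM_k\otimes\bbM_{m_2,n_2}$; this preserves both $\rank\tilde X=\rank X$ and $\rank\tilde X^{\G_A}=\rank X^{\G_A}$, and the padded zero blocks automatically fit the target $E_{ij}$-pattern. Re-running the rank chain of Lemma \ref{k^2Schmidt_rank} inside this ambient square matrix, modified by the fact that the Schmidt rank is $m_1 n_1<k^2$, reaches the canonical form on the non-zero blocks; removing the padding then returns the rectangular canonical form claimed in the lemma.
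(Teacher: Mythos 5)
Your proposal is correct and takes essentially the same route as the paper: the rank chain $\rank X^{\G}\le\sum_{j}\rank X_j\le m_1n_1\cdot\rank X$ forces every block to have rank $r$, after which the reduction to the canonical $E_{ij}$-form is carried over from the square case of Lemma \ref{k^2Schmidt_rank}. The paper's own proof is in fact terser than yours — it states the chain and simply asserts that ``$X$ still can be reduced to the form as is demonstrated before'' — so your zero-padding device and your explicit discussion of how to organise the induction along $k=\max\{m_1,n_1\}$ supply detail that the paper leaves implicit.
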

\begin{proof}
We have that $\rank X^\G=m_1 n_1\cdot\rank X$ is equivalent to the case where all inequalities below become equalities. 
\begin{eqnarray}
\label{m_1_n_1_k^2}
\rank X^\G\le \sum_{j=1}^{m_1 n_1}\rank X_j\le \sum_{j=1}^{m_1 n_1}\rank X=m_1 n_1\cdot \rank X.
\end{eqnarray}
Thus $X$ still can be reduced to the form as is demonstrated before. Then we obtain
\begin{eqnarray}
&\rank X^\G
=\rank \left[\begin{matrix}E_{11}&E_{21}&\ldots&E_{m_11}\\E_{12}&E_{22}&\ldots&E_{m_12}\\\vdots&\vdots&\vdots&\vdots\\E_{1n_1}&E_{2n_1}&\ldots&E_{m_1 n_1}\\\end{matrix}\right]=m_1 n_1r\\
&=m_1 n_1\rank \left[\begin{matrix}E_{11}&E_{12}&\ldots&E_{1n_1}\\E_{21}&E_{22}&\ldots&E_{2n_1}\\\vdots&\vdots&\vdots&\vdots\\E_{m_11}&E_{m_12}&\ldots&E_{m_1 n_1}\\\end{matrix}\right]=m_1 n_1\rank X.
\end{eqnarray}
\end{proof}

Next, let us consider an example. Suppose the matrix $X=\left[\begin{matrix}X_{11}&X_{12}&X_{13}\\X_{21}&X_{22}&X_{23}\\\end{matrix}\right]\in\bbM_{2,3}\otimes\bbM_{m_2,n_2}$ has Schmidt rank $m_1 n_1=6\le m_2 n_2$. Then the inequality in Theorem \ref{cj:1} is saturated if and only if $r=\rank X=\rank X_{ij}$ for $i=1,2;j=1,2,3$ up to local equivalence. Further $X$ is locally equivalent to
\begin{eqnarray}
\label{X2times3} 
\left[\begin{matrix}\left[\begin{matrix}I_r&O&O\\O&O&O\\O&O&O\\\end{matrix}\right]&\left[\begin{matrix}O&I_r&O\\O&O&O\\O&O&O\\\end{matrix}\right]&\left[\begin{matrix}O&O&I_r\\O&O&O\\O&O&O\\\end{matrix}\right]\\\left[\begin{matrix}O&O&O\\I_r&O&O\\O&O&O\\\end{matrix}\right]&\left[\begin{matrix}O&O&O\\O&I_r&O\\O&O&O\\\end{matrix}\right]&\left[\begin{matrix}O&O&O\\O&O&I_r\\O&O&O\\\end{matrix}\right]\\\end{matrix}\right].
\end{eqnarray}
Correspondingly, we have that
\begin{eqnarray}
&\rank X^\G=\rank \left[\begin{matrix}\left[\begin{matrix}I_r&O&O\\O&O&O\\O&O&O\\\end{matrix}\right]&\left[\begin{matrix}O&O&O\\I_r&O&O\\O&O&O\\\end{matrix}\right]\\\left[\begin{matrix}O&I_r&O\\O&O&O\\O&O&O\\\end{matrix}\right]&\left[\begin{matrix}O&O&O\\O&I_r&O\\O&O&O\\\end{matrix}\right]\\\left[\begin{matrix}O&O&I_r\\O&O&O\\O&O&O\\\end{matrix}\right]&\left[\begin{matrix}O&O&O\\O&O&I_r\\O&O&O\\\end{matrix}\right]\\\end{matrix}\right]=6r\\
&=6\rank \left[\begin{matrix}\left[\begin{matrix}I_r&O&O\\O&O&O\\O&O&O\\\end{matrix}\right]&\left[\begin{matrix}O&I_r&O\\O&O&O\\O&O&O\\\end{matrix}\right]&\left[\begin{matrix}O&O&I_r\\O&O&O\\O&O&O\\\end{matrix}\right]\\\left[\begin{matrix}O&O&O\\I_r&O&O\\O&O&O\\\end{matrix}\right]&\left[\begin{matrix}O&O&O\\O&I_r&O\\O&O&O\\\end{matrix}\right]&\left[\begin{matrix}O&O&O\\O&O&I_r\\O&O&O\\\end{matrix}\right]\\\end{matrix}\right]=6\rank X.
\end{eqnarray}

\section{Bipartite matrices of Schmidt rank two}
\label{sec:rank=bm}

In this section, we  transform a Schmidt-rank-two matrix into a simplified form in Sec. \ref{sec:sr2} and \ref{sec:sr2=new}, respectively using two ways. Then we study the rank of partial transpose of a bipartite matrix in Sec. \ref{sec:rank+partialtranspose}.

\subsection{transformation of Schmidt-rank-two matrices}
\label{sec:sr2}

\begin{lemma}
\label{SVDX}
Let $X$ be a Schmidt rank-two matrix $M$ in Theorem \ref{cj:1}, then $X$ can be reduced to a quasi-diagonal form, with most block matrices taking on a standard form of \eqref{SVD_result:EK}. The only imperfection is that there always remains an irreducible matrix in the upper-right corner, which depends on the specific form of the original matrix.
\end{lemma}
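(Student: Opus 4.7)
The plan is to iteratively apply singular value decomposition (SVD) to the two tensor factors of $X$, exploiting the freedom of local equivalence from Definition \ref{def:equivalent}: invertible row/column operations on either factor preserve rank and Schmidt rank. Writing $X = A_1\ox B_1 + A_2\ox B_2$ with $\{A_1,A_2\}$ and $\{B_1,B_2\}$ each linearly independent, the first step is an SVD on $A_1$, choosing invertible $P_1,Q_1$ with $P_1 A_1 Q_1 = \diag(I_{r_1},O)$ where $r_1=\rank A_1$. Writing $A_2' = P_1 A_2 Q_1 = \bma C_{11} & C_{12} \\ C_{21} & C_{22} \ema$ partitioned conformably and applying $P_1\ox I_{m_2}$ on the left and $Q_1\ox I_{n_2}$ on the right, $X$ takes the block form
\begin{equation*}
\bma I_{r_1}\ox B_1 + C_{11}\ox B_2 & C_{12}\ox B_2 \\ C_{21}\ox B_2 & C_{22}\ox B_2 \ema.
\end{equation*}
The three $B_2$-only strips are pure tensors; an SVD on $B_2$ (absorbed by $I\ox P_2$ and $I\ox Q_2$) reduces $B_2$ to $\diag(I_{r_2},O)$, and elementary block-row/column operations using $C_{12}, C_{21}, C_{22}$ convert those strips into the canonical identity/zero pattern of \eqref{SVD_result:EK}.

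Having standardized the three outer strips, I would next attack the upper-left block $I_{r_1}\ox B_1 + C_{11}\ox B_2$, which is the only place where $B_1$ and $B_2$ remain coupled. A further SVD on $B_1$, combined with invertible cleanup of $C_{11}$, lets any piece of $C_{11}$ living in the image and coimage of $B_1$ be absorbed, and any piece lying outside can be pushed into the adjacent $B_2$-only strips already standardized. Iterating the SVD alternately on the two tensor factors shrinks the unreduced block at each round, so the procedure terminates. Whatever coupling between $B_1$ and $B_2$ survives the iteration is the irreducible residue; a permutation of block rows and columns (itself an invertible local operation and hence allowed) collects it into the upper-right corner, giving the claimed quasi-diagonal form.

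The main obstacle I anticipate is to prove that the residue really concentrates in a single corner rather than scattering across several blocks. This requires a careful dimension-counting and invariance argument in terms of the intersections and direct-sum complements of the image and coimage spaces of $A_1, A_2, B_1, B_2$: one must identify the obstruction to full decoupling with a well-defined invariant subspace whose size is independent of the order in which the SVDs are applied, and then show that a suitable sequence of local operations collects exactly this obstruction into one block. A secondary technical point is that the SVD here must use invertible (not merely unitary) transformations in order to exploit the full freedom of Definition \ref{def:equivalent}; this is not an issue mathematically, but it means the usual orthogonality of singular vectors is lost, so that the cleanup steps rely on rank and image-space arguments (leveraging Lemmas \ref{le:rankA=rankA1} and \ref{le:rankA=rankB+rankC}) rather than inner-product ones.
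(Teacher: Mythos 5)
Your overall strategy --- iterated SVD on alternating tensor factors combined with local equivalence and cleanup by elementary block operations --- is the same in spirit as the paper's, which first reduces $A_1$ and $B_1$ simultaneously to $\diag(I_r,O)$ and $\diag(I_s,O)$, then applies SVD to the sub-blocks $A_{24}$, $B_{24}$ and later $A_{11}$, $B_{11}$ of the transformed $A_2$, $B_2$, always using operations from the stabilizer of the already-fixed identity blocks, and ends with the explicit form \eqref{SVD_result:EK} in which the residues $A_{12}$ and $B_{12}$ sit in the upper-right corners. However, your proposal has two genuine gaps. First, and most importantly, you explicitly defer the central step: you write that the main obstacle is ``to prove that the residue really concentrates in a single corner'' and that this ``requires a careful dimension-counting and invariance argument,'' but you do not supply that argument. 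That step is exactly where the paper does its work --- it tracks the explicit block positions through each reduction and exhibits the surviving blocks $A_{12}$, $B_{12}$ concretely in the final display --- so a plan that stops there has not proved the lemma.

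Second, your order of operations creates a conflict you do not resolve. Local equivalence provides only one pair of invertible matrices $(P_2,Q_2)$ acting on the second tensor factor. After you spend it standardizing $B_2$ to $\diag(I_{r_2},O)$, the ``further SVD on $B_1$'' you invoke to treat the block $I_{r_1}\ox B_1 + C_{11}\ox B_2$ would require another transformation of the second factor, which generically destroys the normalized form of $B_2$ unless it lies in the stabilizer of $\diag(I_{r_2},O)$; you never verify that the needed transformations can be so restricted. Likewise, the block-row and block-column operations on the first factor that you use to standardize the strips $C_{12}\ox B_2$, $C_{21}\ox B_2$, $C_{22}\ox B_2$ must preserve $\diag(I_{r_1},O)$ in the first summand, which only permits adding the zero block rows and columns into the nonzero ones and not conversely; this restricts which of the $C_{ij}$ can actually be cleared. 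The paper sidesteps both issues by normalizing $A_1$ and $B_1$ at the outset and thereafter working exclusively with stabilizer-preserving operations (block-diagonal unitaries $\diag(V,W)$ with $UI_rU^\ast=I_r$, and operations supported on the zero blocks). Reordering your steps to match would close the second gap, but the first would still require carrying out the explicit reduction to the concrete form \eqref{SVD_result:EK}.
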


\begin{proof}
Let $X$ be a Schmidt rank-two matrix and $P$, $Q$ two invertible product matrices. We have $\rank X=\rank(PXQ)=\rank((P_1\otimes P_2)(A_1\otimes B_1+A_2\otimes B_2)(Q_1\otimes Q_2))=\rank(\begin{bmatrix}I_r&O\\O&O\\\end{bmatrix}\otimes \begin{bmatrix}I_s&O\\O&O\\\end{bmatrix}+\begin{bmatrix}A_{21}&A_{22}\\A_{23}&A_{24}\\\end{bmatrix}\otimes \begin{bmatrix}B_{21}&B_{22}\\B_{23}&B_{24}\\\end{bmatrix})$ where $r$ is the rank of $A_1$ and $s$ is the rank of $B_1$. Using SVD, $A_{24}$ can also be converted to a simpler form, i.e., $FA_{24}G=\left[\begin{matrix}I_t&O\\O&O\\\end{matrix}\right]$. Meanwhile, since the second row of the first block matrix is zero, we can add multiples of the row containing $A_{24}$ to the first row without altering its structure. To achieve a more concise form, we express the identity matrices as the unitary matrices. Hence, the above expression equals $\rank(\left[\begin{matrix}UU^\ast&O\\O&O\\\end{matrix}\right]\otimes \left[\begin{matrix}VV^\ast&O\\O&O\\\end{matrix}\right]+\left[\begin{matrix}U{A_{21}U}^\ast&O&UA_{22}\\O&I_t&O\\A_{23}U^\ast&O&O\\\end{matrix}\right]\otimes \left[\begin{matrix}V{B_{21}V}^\ast&O&VB_{22}\\O&I_t&O\\B_{23}V^\ast&O&O\\\end{matrix}\right])$ where we do not concern with whether $A_{ij},i,j=1,2,3,4$ in these expressions are exactly the same. Employing elementary transformations once more, we can modify the location of submatrices. So the above expression also equals 
\begin{eqnarray}
\label{eq:UU*}
\rank(\left[\begin{matrix}UU^\ast&O\\O&O\\\end{matrix}\right]\otimes \left[\begin{matrix}VV^\ast&O\\O&O\\\end{matrix}\right]+\left[\begin{matrix}O&A_{11}&O&A_{12}\\O&O&I_t&O\\I_k&O&O&O\\\end{matrix}\right]\otimes \left[\begin{matrix}O&B_{11}&O&B_{12}\\O&O&I_g&O\\I_w&O&O&O\\\end{matrix}\right]),
\end{eqnarray}
where the subscript of the identity matrix continues to represent the rank of the associated matrix. Here, $UU^\ast$ and $VV^*$ are identity matrices of order $r$ and $s$, respectively. Further, $\left[\begin{matrix}O&A_{11}\\O&O\\\end{matrix}\right]$ and $\left[\begin{matrix}O&B_{11}\\O&O\\\end{matrix}\right]$ are matrices of order $r$ and $s$, respectively. One can see that $A_{11}$ is of size $(r-t)\times(r-k)$ and $B_{11}$ is of size $(s-g)\times(s-w)$.
So we obtain that $A_{12}$ is of size $(r-t)\times(n_1-r-t)$, $B_{12}$ is of size $(s-g)\times(n_2-s-g)$, and
\begin{eqnarray}
r+k=m_1,    
\quad
s+w=m_2.
\end{eqnarray}
Using the above facts, we can rewrite \eqref{eq:UU*} as
\begin{eqnarray}
\label{eq:UU*-1}
\rank(
\left[\begin{matrix}I_r&O\\O&O\\\end{matrix}\right]
\otimes \left[\begin{matrix}I_s&O\\O&O\\\end{matrix}\right]
+
\left[\begin{matrix}O&A_{11}&O&A_{12}\\O&O&I_t&O\\I_{m_1-r}&O&O&O\\\end{matrix}\right]\otimes \left[\begin{matrix}O&B_{11}&O&B_{12}\\O&O&I_g&O\\I_{m_2-s}&O&O&O\\\end{matrix}\right]).    
\end{eqnarray}
Now, the first two canonical form matrices can be reformulated to match the size of the following matrices. Therefore it is also $\rank(\left[\begin{matrix}\left[\begin{matrix}I_{r-t}&O\\O&I_t\\\end{matrix}\right]&O\\O&O\\\end{matrix}\right]\otimes \left[\begin{matrix}\left[\begin{matrix}I_{s-g}&O\\O&I_g\\\end{matrix}\right]&O\\O&O\\\end{matrix}\right]+\left[\begin{matrix}\left[\begin{matrix}O&A_{11}\\O&O\\\end{matrix}\right]&\left[\begin{matrix}O&A_{12}\\I_t&O\\\end{matrix}\right]\\\begin{matrix}I_k&O\\\end{matrix}&\begin{matrix}O&O\\\end{matrix}\\\end{matrix}\right]\otimes \left[\begin{matrix}\left[\begin{matrix}O&B_{11}\\O&O\\\end{matrix}\right]&\left[\begin{matrix}O&B_{12}\\I_g&O\\\end{matrix}\right]\\\begin{matrix}I_w&O\\\end{matrix}&\begin{matrix}O&O\\\end{matrix}\\\end{matrix}\right])$. The following transformation can be applied to the identity matrix, with corresponding transformations performed on the two matrices later. We have $UI_rU^\ast=\left[\begin{matrix}V&O\\O&W\\\end{matrix}\right]I_r\left[\begin{matrix}V^\ast&O\\O&W^\ast\\\end{matrix}\right]$ where $U=\left(\begin{matrix}V&O\\O&W\\\end{matrix}\right)$. Respectively, we can apply transformations to the matrix by left-multiplying $\left[\begin{matrix}O&A_{11}\\O&O\\\end{matrix}\right]$ with $\left[\begin{matrix}V&O\\O&W\\\end{matrix}\right]$ and right-multiplying it with $U=\left[\begin{matrix}V^\ast&O\\O&W^\ast\\\end{matrix}\right]$. 
This is equivalent to left-multiplying the first $(r-t)$ rows of $\left[\begin{matrix}O&A_{11}\\O&O\\\end{matrix}\right]$ by $V$ and right-multiplying its last $t$ columns by $W$. Thus, $VA_{11}W^*$ is in the upper-right corner of the original matrix $\left[\begin{matrix}O&A_{11}\\O&O\\\end{matrix}\right]$. In addition, we assume that $VA_{11}W^*$ can transform $A$ to a diagonal matrix $\left[\begin{matrix}d_1&0&\cdots&0&\cdots&0\\0&d_2&\cdots&0&\cdots&0\\\vdots&\vdots&\ddots&\vdots&\cdots&\vdots\\0&0&\cdots&d_h&\cdots&0\\\vdots&\vdots&\vdots&\vdots&\ddots&\vdots\\0&0&\cdots&0&\cdots&0\\\end{matrix}\right]$ where $h$ is $\rank A$. When we left-multiply $\left[\begin{matrix}O&{VA}_{11}W^\ast\\O&O\\\end{matrix}\right]$ with $\left[\begin{matrix}{d_1}^{-1}&0&\cdots&0&0&\cdots&0\\0&{d_2}^{-1}&\cdots&0&0&\cdots&0\\\vdots&\vdots&\ddots&\vdots&\vdots&\cdots&\vdots\\0&0&\cdots&{d_h}^{-1}&0&\cdots&0\\0&0&\cdots&0&1&\cdots&0\\\vdots&\vdots&\vdots&\vdots&\vdots&\ddots&\vdots\\0&0&\cdots&0&0&\cdots&1\\\end{matrix}\right]$, the original matrix has a more concise form. $\left[\begin{matrix}O&A_{11}&O&A_{12}\\O&O&I_t&O\\I_{m_1-r}&O&O&O\\\end{matrix}\right]$ can be transformed to $\left[\begin{matrix}\left[\begin{matrix}O&\left[\begin{matrix}I_h&O\\O&O\\\end{matrix}\right]\\O&O\\\end{matrix}\right]&\left[\begin{matrix}O&A_{12}\\I_t&O\\\end{matrix}\right]\\\begin{matrix}I_{m_1-r}&O\\\end{matrix}&\begin{matrix}O&O\\\end{matrix}\\\end{matrix}\right]$. 

Similarly, for the other half of the original expression, we can apply transformations using unitary matrices. Let $j$ be the rank of matrix $B_{11}$. Then $\left[\begin{matrix}O&B_{11}&O&B_{12}\\O&O&I_g&O\\I_{m_2-s}&O&O&O\\\end{matrix}\right]$ can be transformed into $\left[\begin{matrix}\left[\begin{matrix}O&\left[\begin{matrix}I_j&O\\O&O\\\end{matrix}\right]\\O&O\\\end{matrix}\right]&\left[\begin{matrix}O&B_{12}\\I_g&O\\\end{matrix}\right]\\\begin{matrix}I_{m_2-s}&O\\\end{matrix}&\begin{matrix}O&O\\\end{matrix}\\\end{matrix}\right]$.\\

Therefore the object we want to research has been transformed into the $\rank$ of $\left[\begin{matrix}\left[\begin{matrix}I_{r-t}&O\\O&I_t\\\end{matrix}\right]&O\\O&O\\\end{matrix}\right]\otimes \left[\begin{matrix}\left[\begin{matrix}I_{s-g}&O\\O&I_g\\\end{matrix}\right]&O\\O&O\\\end{matrix}\right]+\left[\begin{matrix}\left[\begin{matrix}O&\left[\begin{matrix}I_h&O\\O&O\\\end{matrix}\right]\\O&O\\\end{matrix}\right]&\left[\begin{matrix}O&A_{12}\\I_t&O\\\end{matrix}\right]\\\begin{matrix}I_{m_1-r}&O\\\end{matrix}&\begin{matrix}O&O\\\end{matrix}\\\end{matrix}\right]\otimes \left[\begin{matrix}\left[\begin{matrix}O&\left[\begin{matrix}I_j&O\\O&O\\\end{matrix}\right]\\O&O\\\end{matrix}\right]&\left[\begin{matrix}O&B_{12}\\I_g&O\\\end{matrix}\right]\\\begin{matrix}I_{m_2-s}&O\\\end{matrix}&\begin{matrix}O&O\\\end{matrix}\\\end{matrix}\right]$. \\

This is equivalent to 

\begin{eqnarray}
\label{SVD_result:EK}
\left[\begin{matrix}\left[\begin{matrix}E&O&\cdots&O\\O&E&\cdots&O\\\vdots&\vdots&\ddots&\vdots\\O&O&\cdots&E\\\end{matrix}\right]&O\\O&O\\\end{matrix}\right]+\left[\begin{matrix}\left[\begin{matrix}O&\left[\begin{matrix}\left[\begin{matrix}K&O&\cdots&O\\O&K&\cdots&O\\\vdots&\vdots&\ddots&\vdots\\O&O&\cdots&K\\\end{matrix}\right]&O\\O&O\\\end{matrix}\right]\\O&O\\\end{matrix}\right]&\left[\begin{matrix}O&A_{12}\\I_t&O\\\end{matrix}\right]\\\begin{matrix}I_{m_1-r}&\ \ \ \ \ \ \ \ \ \ \ \ \ \ \ \
O\\\end{matrix}&\begin{matrix}O&O\\\end{matrix}\\\end{matrix}\right]. 
\end{eqnarray}

where $K=\left[\begin{matrix}\left[\begin{matrix}O&\left[\begin{matrix}I_j&O\\O&O\\\end{matrix}\right]\\O&O\\\end{matrix}\right]&\left[\begin{matrix}O&B_{12}\\I_g&O\\\end{matrix}\right]\\\begin{matrix}I_{m_2-s}&O\\\end{matrix}&\begin{matrix}O&O\\\end{matrix}\\\end{matrix}\right]$ and $\left[\begin{matrix}K&O&\cdots&O\\O&K&\cdots&O\\\vdots&\vdots&\ddots&\vdots\\O&O&\cdots&K\\\end{matrix}\right]$ is a $h\times h$ matrix, $E=\left[\begin{matrix}I_g&O\\O&O\\\end{matrix}\right]$ and $\left[\begin{matrix}E&O&\cdots&O\\O&E&\cdots&O\\\vdots&\vdots&\ddots&\vdots\\O&O&\cdots&E\\\end{matrix}\right]$ is a $r\times r$ matrix. 
\end{proof}

Here, we can repeatedly use SVD to simplify the matrix, reducing it to a quasi-diagonal form, with most block matrices taking on a standard form. The only imperfection is that there always remains an irreducible matrix in the upper-right corner, which depends on the specific form of the original matrix.

\subsection{the range and span related to Schmidt-rank-two matrices}
\label{sec:sr2=new}

\begin{lemma}
\label{le:X=AotimesB+CotimesD}
Let $X=A\otimes B+C\otimes D \in \bbM_d\otimes\bbM_f$. 

(i) If the range of $A$ and $C$ has a full-rank matrix then up to local equivalence, we can assume that $A$ is the order-$d$ identity matrix.   

(ii) If additionally, the range of $B$ and $D$ has also a full-rank matrix then up to local equivalence, we can assume that $A$ and $D$ are respectively the order-$d$ and order-$f$ identity matrix.   

(iii) Using the same hypothesis of claim (ii), we can also assume that $A$ and $B$ are respectively the order-$d$ and order-$f$ identity matrix. 
\end{lemma}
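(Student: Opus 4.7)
The plan is to combine two kinds of freedom: the $\GL_2(\bbC)$ freedom in re-choosing a Schmidt-rank-two decomposition of $X$, and the local-equivalence relation of Definition~\ref{def:equivalent}. Concretely, for any $M\in\GL_2(\bbC)$ with entries $M_{ij}$, setting $(A_0,C_0)^T=M(A,C)^T$ forces $(B_0,D_0)^T=M^{-T}(B,D)^T$, and a direct check shows $X=A_0\otimes B_0+C_0\otimes D_0$ is an alternative Schmidt-two decomposition. All simplification comes from choosing $M$ cleverly and then absorbing invertible new factors into the outer local operations. For claim~(i), the hypothesis supplies $(M_{11},M_{12})$ such that $A_0=M_{11}A+M_{12}C$ is invertible; any completion of $M$ to an invertible matrix then works, and applying $(A_0^{-1}\otimes I_f)$ on the left replaces $A_0$ by $I_d$ without changing the rank-one shape of the second summand.

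For claim~(ii), the key observation is that $A_0=M_{11}A+M_{12}C$ and $D_0=(\det M)^{-1}(M_{11}D-M_{12}B)$ are both controlled by the same pair $(M_{11},M_{12})$. By the hypotheses, the sets $U_1=\{(m_1,m_2)\in\bbC^2: m_1A+m_2C\text{ invertible}\}$ and $U_2=\{(m_1,m_2)\in\bbC^2: m_1D-m_2B\text{ invertible}\}$ are non-vanishing loci of nonzero determinantal polynomials on $\bbC^2$ (the same kind of observation used in Lemma~\ref{le:X=A+yB}), hence nonempty Zariski-open subsets; by the irreducibility of $\bbC^2$, their intersection is nonempty. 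Picking $(M_{11},M_{12})\in U_1\cap U_2$, completing $M$ to be invertible, and applying $(A_0^{-1}\otimes I_f)\,X\,(I_d\otimes D_0^{-1})$ normalizes both $A_0$ and $D_0$ to identity blocks, leaving a decomposition of the form $I_d\otimes B'+C'\otimes I_f$.

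For claim~(iii), I instead need $A_0=M_{11}A+M_{12}C$ and $B_0=(\det M)^{-1}(M_{22}B-M_{21}D)$ simultaneously invertible, together with $\det M\ne0$. These are three Zariski-open conditions on $M\in\bbC^4$: the first depends only on $(M_{11},M_{12})$, the second only on $(M_{21},M_{22})$, and the third is the non-vanishing of the full determinant. Each is nonempty under the hypotheses of (ii), so by the irreducibility of $\bbC^4$ their intersection is nonempty. Choosing such an $M$ and then forming $(A_0^{-1}\otimes B_0^{-1})\,X$ yields $X\sim I_d\otimes I_f+(A_0^{-1}C_0)\otimes(B_0^{-1}D_0)$, which is the desired normal form with $A=I_d$ and $B=I_f$.

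The main obstacle is confirming that the listed Zariski-open subsets really are nonempty and that their intersections are nonempty; both reduce to the standard fact that a finite intersection of dense Zariski-open subsets of an irreducible affine variety is nonempty. Once a valid $M$ is secured, the reduction to identity blocks is routine Kronecker algebra, and no further analysis is needed.
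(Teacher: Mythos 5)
Your proof is correct and follows essentially the same route as the paper: re-choose the Schmidt-rank-two decomposition by a $\GL_2(\bbC)$ change of basis (the paper's part (i) is your construction with the shear $M=\bigl[\begin{smallmatrix}1&x\\0&1\end{smallmatrix}\bigr]$) and then absorb the resulting invertible factors into the local equivalence. The paper disposes of (ii) and (iii) with a one-line appeal to Lemma~\ref{le:X=A+yB}; your Zariski-density argument for intersecting the nonempty open loci is exactly the detail that reference is meant to supply, so you have simply made the intended proof explicit.
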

\begin{proof}
(i) Using the hypothesis, we can assume that $A+xC$ is a full-rank matrix with a complex number $x$. Hence
\begin{eqnarray}
X=(A+xC)\otimes B+
C\otimes(-xB+D).
\end{eqnarray}
Using local equivalence, we can assume that $A+xC$ is the identity matrix. So claim (i) holds.

(ii) and (iii) The claim can be proven by following the proof of (i) and Lemma \ref{le:X=A+yB}. We have completed the proof.
\end{proof}

Using the above lemma, one can readily show the following observation.
\begin{corollary}
\label{cr:X=AotimesB+CotimesD}
Let $X=A\otimes B+C\otimes D \in \bbM_d\otimes\bbM_f$. 

(i) Suppose the span of $A$ and $C$ has a full-rank matrix, and the range of $B$ and $D$ has also a full-rank matrix. Then we can assume that $A$ and $D$ are full-rank matrices, and the eigenvalues of $(A^{-1}\otimes I_f)X(I_d\otimes D^{-1})$ are $\a_j+\b_k$, $j=1,...,d$ and $k=1,...,f$, where $\a_j$'s and $\b_j$'s are respectively the eigenvalues of $A^{-1}C$ and $BD^{-1}$. 
%As a result, the rank of $X$ equals the number of nonzero $(\a_j+\b_k)'$s. 

(ii) Suppose the span of $A$ and $C$ has a full-rank matrix. Then we can assume that $A$ is a full-rank matrix, and the eigenvalues of $(A^{-1}\otimes I_f)X$ are the union of those of $B+\g_jD$, where $j=1,...,d$ and $\g_j$'s are the eigenvalues of $A^{-1}C$. 
%As a result, the rank of $X$ equals the number of nonzero elements in the union. 
\qed
\end{corollary}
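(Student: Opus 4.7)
The plan is to reduce each claim to a short computation with a Kronecker sum, after first using Lemma \ref{le:X=AotimesB+CotimesD} to normalize $A$ (and, for (i), also $D$) to be invertible.

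For (i), Lemma \ref{le:X=AotimesB+CotimesD}(ii) lets us assume that $A$ and $D$ are full rank (the lemma even gives identities, which is stronger than needed). The key computation is then
\begin{equation*}
(A^{-1}\ox I_f)X(I_d\ox D^{-1}) = I_d\ox (BD^{-1}) + (A^{-1}C)\ox I_f,
\end{equation*}
which is exactly the Kronecker sum of $A^{-1}C$ and $BD^{-1}$. The conclusion that its spectrum equals $\{\a_j+\b_k\}$ is the standard spectral formula for Kronecker sums, which one can verify by simultaneously upper-triangularizing the two summands via a common Schur-type transformation and reading off the diagonal.

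For (ii), Lemma \ref{le:X=AotimesB+CotimesD}(i) lets us assume $A$ is full rank, whence
\begin{equation*}
(A^{-1}\ox I_f)X = I_d \ox B + (A^{-1}C)\ox D.
\end{equation*}
Choose an invertible $P$ with $P^{-1}(A^{-1}C)P = T$ upper triangular with diagonal entries $\g_1,\dots,\g_d$. Conjugating by $P\ox I_f$ yields $I_d\ox B + T\ox D$, which, read as a $d\times d$ array of $f\times f$ blocks, is block upper triangular with diagonal blocks $B+\g_j D$. Since the spectrum of a block-triangular matrix is the union of the spectra of its diagonal blocks, the claim follows.

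The main subtlety, rather than a real obstacle, is a bookkeeping point in (i): after the basis change inside $\lin\{A,C\}$ that makes $A$ full rank, the compensating basis change on the $\{B,D\}$ side acts by the inverse $2\times 2$ matrix and so preserves $\lin\{B,D\}$. Hence the hypothesis that $\lin\{B,D\}$ contains a full-rank matrix survives, and a subsequent basis change there makes $D$ full rank without disturbing $A$. Once these normalizations are in place, each claim is the direct algebraic identity above combined with a well-known spectral fact.
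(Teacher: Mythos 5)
Your two core computations are exactly right and are what the paper leaves unsaid: the corollary is stated with only "one can readily show" and a \qed, so the intended proof is precisely your reduction via Lemma \ref{le:X=AotimesB+CotimesD} followed by the Kronecker-sum identity $(A^{-1}\ox I_f)X(I_d\ox D^{-1}) = I_d\ox (BD^{-1}) + (A^{-1}C)\ox I_f$ for (i) and the block-triangularization $I_d\ox B + T\ox D$ for (ii). Both spectral conclusions follow as you say.

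The one place where your reasoning goes wrong is the "bookkeeping" paragraph for (i). You claim that after normalizing $A$ to be full rank, "a subsequent basis change there makes $D$ full rank without disturbing $A$." That is not possible in general: writing $\bigl(\begin{smallmatrix}B'\\ D'\end{smallmatrix}\bigr)=N\bigl(\begin{smallmatrix}B\\ D\end{smallmatrix}\bigr)$ forces $\bigl(\begin{smallmatrix}A'\\ C'\end{smallmatrix}\bigr)=(N^T)^{-1}\bigl(\begin{smallmatrix}A\\ C\end{smallmatrix}\bigr)$ in order to keep $X$ fixed, and the only $N$ that fix $A$ up to scalar are those that merely rescale $D$ (so they cannot repair a singular $D$); any $N$ that genuinely mixes $B$ into $D$ necessarily mixes $C$ into $A$. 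The correct repair is to make both normalizations in a single step by genericity: $\det(m_{11}A+m_{12}C)$ and $\det(-m_{12}B+m_{11}D)$ are each nonzero polynomials in $(m_{11},m_{12})$ by the two hypotheses, so a generic choice of the first row of $M$ makes both the new $A$ and the new $D$ (which is determined by the same pair $(m_{11},m_{12})$ through $(M^T)^{-1}$) simultaneously full rank. Since you also invoke Lemma \ref{le:X=AotimesB+CotimesD}(ii), which grants the normalization outright, this slip does not invalidate the proof, but the offending sentence should be corrected or deleted rather than left as a purported justification.
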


Following the above argument, one can obtain a general result of the rank of Schmidt-rank-two $X$.
\begin{lemma}
\label{le:X=AotimesB+CotimesD=square}
Let $X=A\otimes B+C\otimes D \in \bbM_d\otimes\bbM_f$. If $A$ and $C$ are simultaneously diagonalizable and have eigenvalues $a_1,...a_d$ and $c_1,...,c_d$, respectively, then the eigenvalues of $X$ are the union of those of $a_jB+c_jD$ for $j=1,...,d$. The rank of $X$ equals the sum of those of $a_jB+c_jD$ for $j=1,...,d$.   
\end{lemma}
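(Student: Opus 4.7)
The plan is to exploit the simultaneous diagonalization directly via a similarity transformation of the Kronecker form. Since $A$ and $C$ are simultaneously diagonalizable, there exists an invertible matrix $P\in\bbM_d$ such that $P^{-1}AP=\diag(a_1,\dots,a_d)$ and $P^{-1}CP=\diag(c_1,\dots,c_d)$. I would then conjugate $X$ by $P\otimes I_f$, which preserves both eigenvalues and rank because $P\otimes I_f$ is invertible.

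Applying this similarity, I compute
\begin{eqnarray}
(P^{-1}\otimes I_f)X(P\otimes I_f)
&=& (P^{-1}AP)\otimes B + (P^{-1}CP)\otimes D \\
&=& \diag(a_1,\dots,a_d)\otimes B + \diag(c_1,\dots,c_d)\otimes D.
\end{eqnarray}
The key observation is that the right-hand side is block-diagonal with $d$ blocks, where the $j$-th diagonal block is exactly $a_jB+c_jD\in\bbM_f$. From here the conclusions follow immediately: the spectrum of a block-diagonal matrix is the (multiset) union of the spectra of its blocks, and its rank is the sum of the ranks of its blocks. Since conjugation by $P\otimes I_f$ changes neither spectrum nor rank, both claims about $X$ follow.

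There is no substantive obstacle here; the only care required is to note that $A,C$ being simultaneously diagonalizable means a single $P$ works for both, which is exactly the hypothesis. I would conclude the proof by explicitly writing
\begin{eqnarray}
\rank X = \sum_{j=1}^{d}\rank(a_jB+c_jD),
\qquad
\mspec(X) = \bigcup_{j=1}^{d}\mspec(a_jB+c_jD),
\end{eqnarray}
as the direct consequences of block-diagonality.
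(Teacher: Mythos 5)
Your proposal is correct and follows essentially the same route as the paper: conjugate $X$ by $P\otimes I_f$ using the simultaneous diagonalization of $A$ and $C$, obtain the block-diagonal matrix with blocks $a_jB+c_jD$, and read off the spectrum and rank. No gaps.
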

\begin{proof}
Because $A$ and $C$ are diagonalizable simultaneously, we can assume that $P^{-1}AP=\left[\begin{matrix}a_1&\cdots&0\\\vdots&\ddots&\vdots\\0&\cdots&a_d\\\end{matrix}\right]$ and $P^{-1}CP=\left[\begin{matrix}c_1&\cdots&0\\\vdots&\ddots&\vdots\\0&\cdots&c_d\\\end{matrix}\right]$. We perform the following matrix multiplication, 
\begin{eqnarray}
&&(P\otimes I_f)^{-1}X(P\otimes I_f)\\
=&&(P\otimes I_f)^{-1}[P \left[\begin{matrix}a_1&\cdots&0\\\vdots&\ddots&\vdots\\0&\cdots&a_d\\\end{matrix}\right] P^{-1}\otimes B+P\left[\begin{matrix}c_1&\cdots&0\\\vdots&\ddots&\vdots\\0&\cdots&c_d\\\end{matrix}\right] P^{-1}\otimes D ](P\otimes I_f)\notag\\
=&&
\left[\begin{matrix}a_1&\cdots&0\\\vdots&\ddots&\vdots\\0&\cdots&a_d\\\end{matrix}\right] \otimes B+\left[\begin{matrix}c_1&\cdots&0\\\vdots&\ddots&\vdots\\0&\cdots&c_d\\\end{matrix}\right] \otimes D\\
=&&
\left[\begin{matrix}a_1B+c_1D&\cdots&0\\\vdots&\ddots&\vdots\\0&\cdots&a_dB+c_dD\\\end{matrix}\right].
\end{eqnarray}
\end{proof}

For the case of non-square matrices $A,B,C,$ and $D$, we can append zero rows or columns so that they become square matrices. Then Lemma \ref{le:X=AotimesB+CotimesD=square} applies.

%\opp What if the precondition of Lemma \ref{le:X=AotimesB+CotimesD=square} is not satisfied? 
%\begin{eqnarray}
%X=&&A\otimes B+C\otimes D\\=&&\left[\begin{matrix}{A_1}^{m\times n}&A_2\\O&A_3\\\end{matrix}\right]\otimes B+\left[\begin{matrix}{C_1}^{m\times n}&C_2\\O&C_3\\\end{matrix}\right]\otimes D,
%\end{eqnarray}
%where $m< n$.

\subsection{rank of matrix and its partial transpose}

\label{sec:rank+partialtranspose}

Based on the above discussion, we can explore more types of matrices that satisfies $\rank(X)= \rank (X^{\Gamma})$ where $\Gamma$ is the partial transpose w.r.t some of the systems in $X$. 

(i) To begin with, let us consider the first scenario. If $A$ is a diagonal matrix, we assume $A=diag{(\lambda_1,\lambda_2\ldots\lambda_n)}$ where $\lambda_i\neq0\ (i=1,2,\ldots\ n)$. As is mentioned before, for $X=A\otimes B$, $\rank(X)= \rank (X^\Gamma)$ where $\Gamma$ is the partial transpose w.r.t system $A$. That is because in fact, the rank of the transpose of $A$ equals the $rank$ of $A$. Matrix $X$ is equivalent to $B$ matrix with non-zero numbers multiplied and placed along the diagonal. Also, $\rank(X)= \rank (X^\Gamma)$ where $\Gamma$ is the partial transpose w.r.t system $B$ because the rank of the transpose of $B$ equals the rank of $B$.

(ii) if A is in the standard form, we assume $A=\left[\begin{matrix}\lambda_1&0&\cdots&0&0&\cdots&0\\0&\lambda_2&\cdots&0&0&\cdots&0\\\vdots&\vdots&\ddots&\vdots&\vdots&\vdots&\vdots\\0&0&\cdots&\lambda_r&0&\cdots&0\\0&0&0&0&0&\cdots&0\\\vdots&\vdots&\vdots&\vdots&\vdots&\ddots&\vdots\\0&0&\cdots&0&0&\cdots&0\\\end{matrix}\right]$ where $A$ is an $m\times n$ matrix. Whether $\Gamma$ is the partial transpose w.r.t system $A$ or $\Gamma$ is the partial transpose w.r.t system $B$, we can both arrive at the conclusion that $\rank(X)= \rank (X^\Gamma)$. 

(iii) Thirdly, more generally, if $A$ is a block diagonal matrix, we assume $A=\left[\begin{matrix}A_1&O&\cdots&O\\O&A_2&\cdots&O\\\vdots&\vdots&\ddots&\vdots\\O&O&\cdots&A_n\\\end{matrix}\right]$. Then we can transform the problem into discussing each block of $A_i(i=1,2,\ldots\ n)$. Based on the above discussion, we know that for $X_i=\ A_i\otimes\ B$, in which $A_i\otimes\ B$ means the Kronecker product of matrix $A_i$ and $B$ and for matrix $X_i^\Gamma$, where $\Gamma$ is the partial transpose w.r.t system $A_i$, $B$ or $A_i$ and $B$, $\rank(X_i)= \rank (X_i^\Gamma)$. Besides the rank of the large matrix $A$ is the sum of the ranks of its $n$ smaller submatrices due to the special positions of these  smaller submatrices. That is why $\rank(X)= \rank (X^\Gamma)$ in this scenario, too. So we present the first fact of this paper.

\begin{lemma}
\label{le:blockDiagonal}
A block-diagonal matrix $X$ satisfies $\rank X= \rank X^\G$.
\qed
\end{lemma}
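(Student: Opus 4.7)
The plan is to exploit the block-diagonal structure directly. Write $X$ in its explicit form $X=\sum_{i=1}^{m_1}\ketbra{i}{i}\otimes X_{ii}$, so that as an ordinary matrix $X=\mathrm{diag}(X_{11},X_{22},\ldots,X_{m_1 m_1})$, with all off-diagonal blocks vanishing. The goal is to reduce the problem to a statement about the ranks of the individual diagonal blocks, which behave well under transposition.

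First I would establish that $\rank X=\sum_{i=1}^{m_1}\rank X_{ii}$. This is the standard fact that the rank of a block-diagonal matrix is the sum of the ranks of its diagonal blocks, which follows from Lemma \ref{lemma1} (the inequalities \eqref{baseineq1} and \eqref{baseineq2}) applied iteratively: the upper bound comes from \eqref{baseineq1} applied to the columns, while the lower bound comes from \eqref{baseineq2} applied to the block triangular structure (with the off-diagonal blocks being zero).

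Next, I would compute the partial transpose explicitly. With the bipartite convention used in the paper, $X^{\Gamma_A}=\sum_{i,j}\ketbra{j}{i}\otimes X_{ij}$, and since $X_{ij}=0$ whenever $i\neq j$, we immediately obtain $X^{\Gamma_A}=X$, so rank is trivially preserved. For the partial transpose with respect to $B$, $X^{\Gamma_B}=\sum_{i}\ketbra{i}{i}\otimes X_{ii}^T$, which is again block-diagonal with blocks $X_{ii}^T$. Applying the same rank formula as in the first step gives $\rank X^{\Gamma_B}=\sum_{i}\rank X_{ii}^T$. Since transposition preserves the rank of each individual block, $\rank X_{ii}^T=\rank X_{ii}$, and summing yields $\rank X^{\Gamma_B}=\rank X$.

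There is no substantive obstacle here; the result is essentially an immediate corollary of (i) the additivity of rank for block-diagonal matrices and (ii) the fact that ordinary transposition preserves rank. The lemma really serves as a formal record of the observation made in cases (i)--(iii) of the preceding discussion, unifying them under the single hypothesis that $X$ is block-diagonal in the bipartite sense.
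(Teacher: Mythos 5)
Your proof is correct, and it rests on the same two pillars as the paper's argument: additivity of rank over diagonal blocks and invariance of rank under ordinary transposition. The difference is one of scope and precision. The paper's justification (the discussion in items (i)--(iii) preceding the lemma) only explicitly treats the Schmidt-rank-one case $X=A\otimes B$ with $A$ diagonal, in standard form, or block-diagonal, reducing to sub-blocks $A_i\otimes B$; it never addresses a block-diagonal bipartite matrix whose diagonal blocks are arbitrary. You instead read ``block-diagonal'' as $X_{ij}=0$ for $i\neq j$ in the bipartite blocking $X=\sum_{i,j}\ketbra{i}{j}\otimes X_{ij}$, note that $X^{\Gamma_A}=X$ trivially, that $X^{\Gamma_B}=\sum_i\ketbra{i}{i}\otimes X_{ii}^T$ is again block-diagonal, and conclude from $\rank X_{ii}^T=\rank X_{ii}$ together with the rank additivity supplied by Lemma \ref{lemma1}. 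This is the right reading of the statement: under any looser interpretation of ``block-diagonal'' (e.g.\ a blocking not aligned with the tensor factorization, or a single block) the claim would be false, since the whole paper concerns matrices with $\rank X^{\Gamma_B}\neq\rank X$. So your version covers strictly more cases than the paper's own discussion, is explicit about which partial transpose is meant, and has no gap.
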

Because a generic matrix has full rank, one can say that the equality in Lemma \ref{le:blockDiagonal} holds generically. However, constructing a specific family of matrices saturating the equality is a challenge. 
An idea is to try Schmidt-rank-two $X$ by using the previously established results.

\section{Conclusions}
\label{sec:con}

We have shown that  matrix inequality $\rank(\sum^K_{j=1} A_j^T \otimes B_j)\le K \cdot \rank(\sum^K_{j=1} A_j \otimes B_j)$ can be saturated when $A_i$'s are column or row vectors, or two-by-two matrices. We also study the general case for $K=2$. An open problem arising from this paper is to find out the condition saturating the inequality for Schmidt-rank-two matrices with any dimensions. One may be interested in extending the above results to bipartite matrices of larger Schmidt rank. Another issue is fully construct the matrices $X$ such that $\rank(X)= \rank (X^{\Gamma})$.

As the next step, we propose a conjecture related to the above facts, based on Theorem \ref{le:ANYm2,K>2} and equation \eqref{XEij}.

\section*{Acknowledgments}
\label{sec:ack}	

%We thank XXX for valuable discussions. 

Authors were supported by the NNSF of China (Grant No. 12471427), and the Fundamental Research Funds for the Central Universities (Grant Nos. ZG216S2110).

\bibliographystyle{unsrt}

\bibliography{Thefinalanswer} 

\end{document}